\newcounter{cdefinition}
\newtheorem{definition}[cdefinition]{Definition}
\newcounter{clemma}
\newtheorem{lemma}[clemma]{Lemma}
\newcounter{ctheorem}
\newtheorem{theorem}[ctheorem]{Theorem}
\newcounter{ccorollary}
\newtheorem{corollary}[ccorollary]{Corollary}
\begin{document}
%
\title{Interference Alignment with Quantized Grassmannian Feedback in the K-user Constant MIMO Interference Channel}

\author{\IEEEauthorblockN{Mohsen Rezaee\\}
\IEEEauthorblockA{
Universit\"at Paderborn, Dept. of Electrical Engineering and Information Technology, Paderborn, Germany, \\ {\texttt{mohsen.rezaee@sst.upb.de}}\\} 

\and

\IEEEauthorblockN{Maxime Guillaud\\}
\IEEEauthorblockA{Huawei Technologies, France Research Center,\\ Mathematical and Algorithmic Sciences Lab, Paris, France\\
{\texttt{maxime.m.guillaud@ieee.org}}}
}


%


\maketitle

\begin{IEEEkeywords}K-user MIMO interference channel, interference alignment, CSI quantization, Grassmann manifold, limited CSI.
\end{IEEEkeywords}

\pagebreak

\begin{abstract}
A simple channel state information (CSI) feedback scheme is proposed for interference alignment (IA) over the $K$-user constant Multiple-Input-Multiple-Output Interference Channel (MIMO IC). The proposed technique relies on the identification of invariants in the IA equations, which enables the reformulation of the CSI quantization problem as a single quantization on the Grassmann manifold at each receiver.
The scaling of the number of feedback bits with the transmit power sufficient to preserve the multiplexing gain that can be achieved under perfect CSI is established. We show that the CSI feedback requirements of the proposed technique are better (lower) than what is required when using previously published methods, for system dimensions (number of users and antennas) of practical interest.
Furthermore, we show through simulations that this advantage persists at low SNR, in the sense that the proposed technique yields a higher sum-rate performance for a given number of feedback bits.
Finally, to complement our analysis, we introduce a statistical model that faithfully captures the properties of the quantization error obtained for random vector quantization (RVQ) on the Grassmann manifold for large codebooks; this enables the numerical  (Monte-Carlo) analysis of general Grassmannian RVQ schemes for codebook sizes that would be impractically large to simulate.
\footnote{Part of the results presented in this paper have appeared in M.~Rezaee and M.~Guillaud, ``Limited Feedback for Interference Alignment in the K-user MIMO Interference Channel,'' \emph{Proc. IEEE Information Theory Workshop (ITW)}, September 2012. The statistical model for the quantization error of RVQ (Section~\ref{GMperturbation}), has appeared together with an extension of the present results to the problem of CSI exchange on the backhaul -- not covered in the present paper -- in M.~Rezaee, M.~Guillaud, and F.~Lindqvist, ``CSIT Sharing over Finite Capacity Backhaul for Spatial Interference Alignment,'' Proc. International Symposium on Information Theory (ISIT), Jun. 2013. This work was performed while both authors were with the Institute of Telecommunications of Vienna University of Technology, Vienna, Austria. }
\end{abstract}



%
\IEEEpeerreviewmaketitle

\section{Introduction}
Multiple-antenna transceivers are known to improve the performance of wireless communication links compared to single-antenna systems. The increasing demand for high throughput and reliable transmission necessitates efficient use of Multiple-Input-Multiple-Output (MIMO) systems. In particular in multi-user networks where interference is a major concern, the availability of channel state information (CSI) at the transmitter is crucial in order to fully exploit the performance improvement of MIMO systems. 
In scenarios where the channel is not reciprocal (such as frequency-division duplex systems), the CSI has to be quantized and fed back to the transmitter. 
The mismatch between the true channel and the quantized channel results in a degradation in performance. 

In this article, we focus on interference alignment (IA) applied to the $K$-user constant MIMO IC. IA has been shown to achieve the optimal multiplexing gain (also called the degrees of freedom, DoF) over the $K$-user interference channel when perfect CSI is available at the transmitters \cite{Kuser:CJ}; it was introduced for the $K$-user MIMO IC in \cite{Gou_Jafar_DoF_MIMO_Kuser_IC_IT2010}. It consists in designing the precoders such that the total interference at each receiver lies in a space with minimum dimensions so that the remaining dimensions can be used for interference-free decoding. When only imperfect CSI is available, 
the channel mismatch not only reduces the effective channel gain but also causes interference between users. The performance of IA with imperfect CSI has been analyzed e.g. in \cite{ImPerCh:Guill,Xie_etal_IA_CSI_uncertainty_TCom13}.

Extensive research has been made on limited feedback schemes for point-to-point MIMO systems \cite[and references therein]{loveg}. In \cite{love}, codebook design is investigated when the receiver selects the best unitary precoder from a finite codebook and feeds back the index of the selected precoder to the transmitter. \cite{love} shows that the optimal design for such a codebook is equivalent to the Grassmannian subspace packing problem. Some useful quantization bounds on the Grassmann manifold are derived in \cite{dai,Barg}. In \cite{Wiro}, quantization of the precoding matrix using random vector quantization (RVQ) codebooks is investigated, providing insights on the asymptotic optimality of RVQ.

Concerning multi-user systems, the question of the scaling of the size of the codebook used for CSI feedback with increasing signal-to-noise ratio (SNR) has been explored in a number of recent works. 
Generally speaking, using imperfect CSI at the transmitter (CSIT) to compute the transmit precoders in a multi-user system causes interference at the receiver side. Since the power of this interference scales with the transmit power, it is necessary to compensate any increase in transmit power by decreasing the quantization error affecting the CSIT, if the interference at the receiver is to remain bounded. This has led several authors to study how the codebook size should scale with the SNR in order to preserve the degrees of freedom achievable with perfect CSI, for several feedback schemes. 
The case of the broadcast channel was considered first; assuming zero-forcing precoding and single-antenna receivers, it has been determined in \cite{Jindal_BC_finite_rate_feedback_IT06} that scaling the amount of feedback bits with $(M-1) \log P$ (where $M$ is the number of antennas at the transmitter and $P$ the transmit power) at each receiver is sufficient to achieve full DoF.
For the $K$-user IC, most results on CSI quantization focus on transmission schemes based on IA, since IA is instrumental in achieving the channel DoF \cite{Cadambeit08,Gou_Jafar_DoF_MIMO_Kuser_IC_IT2010}. Specifically, in that context, the CSI feedback problem is considered for $L$-tap frequency selective SISO links  in \cite{Thukral_Boelcskei_IA_limited_feedback_ISIT2009}, where it is shown that the channel DoF is achievable if the number of bits used to encode the CSI scales with $K(L-1)\log P$. This result was further extended to the $N\times M$ MIMO frequency-selective IC in \cite{rajac}, where $\min\{M,N\}^2 K(R L-1)\log P$ bits (with $R=\lfloor \frac{ \max\{M,N\} }{ \min\{M,N\} }\rfloor$) are shown to be required to achieve the perfect-CSI DoF. However, both \cite{Thukral_Boelcskei_IA_limited_feedback_ISIT2009} and \cite{rajac} rely on the same analysis, which is not applicable to the flat-fading case\footnote{It is noted in \cite{Thukral_Boelcskei_IA_limited_feedback_ISIT2009} that the result does not hold for low values of $L$, however the minimum $L$ for which it holds can not be conclusively ascertained from the article. We note that in particular, for the flat-fading case ($L=1$) of interest in this paper, both  \cite{Thukral_Boelcskei_IA_limited_feedback_ISIT2009} for the SISO case and \cite{rajac} for the MIMO square case ($M=N$) yield a scaling independent of $\log P$, which is unrealistic.}.

In \cite{KimMoonLeeLee_CSI_quantization_MIMOIC_TWC2012}, the authors introduce two quantization schemes for the MIMO flat-fading $K$-user IC. The first one is based on quantization on the composite Grassmann manifold (inspired by \cite{rajac}). The second method improves the quantization accuracy by introducing a virtual receive filter at each receiver which leaves the IA equations invariant; the quantization error can be reduced by optimizing this virtual filter, however the process is computationally complex and must be repeated for each codeword and each channel realization. No asymptotic (high SNR) analysis  is provided in \cite{KimMoonLeeLee_CSI_quantization_MIMOIC_TWC2012}; it is easy to figure out that the first considered method requires a scaling of $(K-1)(MN-1)\log P$ to achieve the channel DoF, however the scaling required for the second method to achieve full DoF is not clear.\\

In this paper, we present a novel CSI quantization and feedback scheme for IA over the K-user constant MIMO IC. The salient points of our contribution are:
\begin{itemize}
\item The proposed feedback scheme exploits the invariances in the IA equations to reduce the dimension of the quantization space, without requiring the heavy iterative processing of e.g. \cite{KimMoonLeeLee_CSI_quantization_MIMOIC_TWC2012}.
\item We characterize the scaling (with SNR) of the codebook size under which the proposed feedback scheme achieves the same DoF as with perfect CSIT. This scaling is shown to be better (slower) than the scaling obtained using the schemes from \cite{Jindal_BC_finite_rate_feedback_IT06} or \cite{KimMoonLeeLee_CSI_quantization_MIMOIC_TWC2012} for all system dimensions where IA is feasible.
\item At non-asymptotic SNR and for a fixed codebook size, the proposed scheme is shown by simulation to achieve better sum-rate performance than the methods from \cite{Jindal_BC_finite_rate_feedback_IT06} or \cite{KimMoonLeeLee_CSI_quantization_MIMOIC_TWC2012}.
\item As a by-product of our analysis, we introduce a statistical model that faithfully captures the properties of the quantization error of RVQ on the Grassmann manifold for large codebooks; we use it to generate rotations that closely approximate the true quantization error of RVQ. This tool enables numerical analysis of general Grassmannian RVQ schemes for large codebook sizes, without requiring the generation of the codebook nor the exhaustive search normally associated with the quantizer.
\end{itemize}


The remainder of the paper is organized as follows. In Section \ref{sec:model}, the system model is described. A reformulation of the CSI representation for the interference alignment problem is provided in Section \ref{sec:IA}. The limited feedback (quantized) scheme is presented in Section \ref{sec:LF}, while the achievable rates and DoF are analyzed in Section \ref{sec:rateanal}. Simulation results are presented in Section \ref{sec:simulation} together with the statistical RVQ error model, and conclusions are drawn in Section \ref{sec:conclusion}.\\

{\it Notation:} Non-bold letters represent scalar quantities, boldface lowercase and uppercase letters indicate vectors and matrices, respectively. ${\bf I}_N$ is the $N\times N$ identity matrix, while $\bf 0$ denotes an all-zeros matrix. The trace, conjugate, transpose, Hermitian transpose of a matrix or vector are denoted by ${\rm tr}(\cdot),(\cdot)^*, (\cdot)^{\rm T}, (\cdot)^{\rm H}$ respectively. The expectation operator over variable $X$ is represented by ${\rm E_{X}(\cdot)}$. The determinant of a matrix (or absolute value of a scalar) is represented by $|\cdot |$. ${\mathcal G}_{n,d}$ denotes the complex Grassmann manifold of dimensions $(n,d)$, i.e. the set of all $d$-dimensional vector subspaces of an $n$-dimensional vector space over $\mathbb{C}$.
The Frobenius norm of a matrix is denoted by $||\cdot ||_{\rm F}$ while the two-norm (spectral norm) of a matrix is represented by $||\cdot ||_2$. A block diagonal matrix is denoted by ${\rm Bdiag}(\cdot)$ with the argument blocks on its diagonal. $\mathcal{N}(0,1)$ (resp. $\mathcal{CN}(0,1)$) denotes the real (resp. circularly symmetric complex) Gaussian  distribution with zero mean and unit variance. The largest eigenvalue of a matrix is denoted by $\lambda_{\rm max} (\cdot)$. Finally, $\rm log $ represents the logarithm in base 2.\\


\section{System Model}
\label{sec:model}

A MIMO interference channel is considered in which $K$ transmitters communicate with their respective receivers over a shared medium.  For the sake of notational simplicity, we consider the symmetric case where each transmitter has $M$ antennas while each receiver is equipped with $N$ antennas, although the method discussed here applies to non-symmetric settings as well.

Assume that transmitter $j$ employs a precoding matrix ${{\bf V}}_j$ to transmit $d$ data streams to its respective receiver. The $N$-dimensional signal at receiver $i$ reads
\begin{equation}\label{E1}
	{{\bf{y}}_i} = {{{{\bf H}}}_{ii}}{{{\bf V}}_i}{{\bf{x}}_i} + \sum_{ \substack{1\leq j\leq K\\ j \ne i}}{{{{{\bf H}}}_{ij}}{{{{\bf V}}}_j}{{\bf{x}}_j}}  + {{\bf{n}}_i}
\end{equation}
in which ${{{{\bf H}}}_{ij}} \in {\mathbb{C}^{N \times M}}$ is the channel matrix between transmitter $j$ and receiver $i$, ${{{{\bf V}}}_{j}} \in {\mathbb{C}^{M \times d}}$ is a truncated unitary matrix (${\bf V}_j^{\rm H}{\bf V}_j={\bf I}_d$), and ${{\bf{x}}_{j}} \in {\mathbb{C}^{d }}$ is the symbol vector of transmitter $j$. Furthermore, ${{\bf{n}}_{i}} \in {\mathbb{C}^{N}}$ is the additive noise at receiver $i$ whose elements are distributed independently as $\mathcal{CN}(0,1)$. We assume Gaussian circularly symmetric i.i.d. signaling with ${\rm {E}}\left[  {{\bf{x}}_j}{\bf{x}}_j^{\rm H} \right] = {\frac{P}{d}}{\bf I}_{d},{\rm{ }}\,\,\,j = 1, \ldots ,K$, where $P$ denotes the per-user transmit power. 
Following \cite{Yetis_Gou_Jafar_Kayran_feasibility_of_IA09}, we assume that the channel coefficients are generic; in particular, this condition is fulfilled by any channel model where the coefficients are drawn independently from a continuous distribution, such as the classical Gaussian i.i.d. model.\\

\section{Proposed Grassmannian Feedback Scheme for Interference Alignment}
\label{sec:IA}

Let us consider the interference alignment problem of \cite{Gou_Jafar_DoF_MIMO_Kuser_IC_IT2010}, and assume that the CSI is fed back from the receivers to the transmitters\footnote{The underlying assumption here is that all $K$ transmitters can exchange CSI instantaneously and ``for free.'' Alternatively, one can consider a central node (to which all the CSI would be forwarded) where the precoders are computed and subsequently distributed to the transmitters; this distinction is immaterial, and the results presented here apply to both cases. Variations on these assumptions are considered in \cite{Rezaee_Guillaud_Lindqvist_clustered_IA_ISIT2013}.}. 
Specifically, assume that the $i$th receiver has perfect knowledge of the channel matrices ${\bf H}_{ij}, \, \forall \,\, j\neq i$ and feeds back the corresponding information to the transmitters so that every transmitter is capable of solving the alignment problem.
In this section we consider perfect CSI feedback in order to highlight the intuition behind the dimensionality reduction associated with the proposed feedback scheme.  We will further assume that $(K-1)M \geq N$, which represents the cases of interest where interference would occupy all dimensions of the receive subspace in the absence of alignment.

For reference, let us first consider the case where the channel matrices  ${\bf H}_{ij}, \, \forall \,\, j\neq i$ themselves are known perfectly at the transmitter. The precoders ${{{\bf V}}_i}$, $i=1\ldots K$ must be designed to align the interference at each receiver into a $N-d$ dimensional space, in order to achieve $d$ interference-free dimensions per user.   
A solution to the IA problem exists (see \cite{Yetis_Gou_Jafar_Kayran_feasibility_of_IA09} and more recently \cite{Razaviyayn_etal_DoF_MIMO_IA_2011,Bresler_Cartwright_Tse_settling_feasibility_of_IA_symmetric_square_2011} for feasibility criteria  -- here we will assume that the dimensions and the considered channel realizations are such that the problem is feasible almost surely (a.s.)) iff there exist full rank precoding matrices ${ {\bf V}}_j ,\, j=1,\ldots,K$ and projection matrices ${\bf U}_i \in {\mathbb{C}^{N \times d }}, \,i=1,\ldots,K$ such that
\begin{eqnarray} 
&&{\bf U}_i^{\rm H}{\bf H}_{ij}{\bf V}_j={\bf 0} \,\,\, \,\  \,\,\,\,\, \forall i,j \in \{1,\ldots,K\}, \,\, j \neq i,  \quad \mathrm{and} \label{ia_definition1}
\\
&&{\rm rank}\left({\bf U}_i^{\rm H}{\bf H}_{ii}{\bf V}_i\right)=d. \label{ia_definition2} 
\end{eqnarray}

At this point, some remarks are in order. As pointed out in \cite{Gou_Jafar_DoF_MIMO_Kuser_IC_IT2010}, the difficulty in finding an IA solution typically lies in solving eq.~\eqref{ia_definition1}, while \eqref{ia_definition2} is fulfilled a.s. under the prevailing channel assumptions for any choice of full-column rank ${\bf U}_i$, ${\bf V}_j$ matrices.  We also remark that despite the symmetry of eq.~\eqref{ia_definition1} with respect to transposition, only the precoders are required to be known at the transmitters; for a given set of precoders ${\bf V}_1,\ldots, {\bf V}_K$, the mere knowledge of the \emph{existence} of full-column rank matrices ${\bf U}_1,\ldots, {\bf U}_K$ fulfilling \eqref{ia_definition1} is sufficient to conclude that the precoders are interference-aligning. These considerations lead us to introduce the following definition:
\begin{definition}[IA precoders]
The full-column rank precoders ${\bf V}_1,\ldots, {\bf V}_K$ are interference-aligning for the considered MIMO IC iff there exist full-column rank matrices ${\bf U}_1,\ldots, {\bf U}_K$ fulfilling \eqref{ia_definition1}.
\end{definition}

\subsection{Proposed Grassmannian feedback scheme}

In order to introduce our proposed scheme, let us note that \eqref{ia_definition1} can be rewritten from the point of view of receiver $i$ in the form
\begin{equation} \label{concatenated_IA_condition}
{\bf U}_i^{\rm H}{\bf H}_{i}{\bf V}_{-i}={\bf 0}  \,\,\, \,\,\, \forall i \in \{1,\ldots,K\},
\end{equation}
in which ${\bf V}_{-i}={\rm Bdiag}({\bf V}_1,\ldots,{\bf V}_{i-1},{\bf V}_{i+1},\ldots,{\bf V}_K) \in \mathbb{C}^{(K-1)M \times (K-1)d}$ is the block-diagonal concatenation of the precoders and ${\bf H}_i=[{\bf H}_{i\,1},\ldots,{\bf H}_{i\,i-1},{\bf H}_{i\,i+1},\ldots,{\bf H}_{i\,K}] \in \mathbb{C}^{N \times (K-1)M}$ is the concatenation of the channel matrices of all interfering links ending at receiver $i$, excluding the direct link. 
The proposed feedback scheme consists for each receiver $i$ in feeding back only the row space of ${\bf H}_i$. Our first result consists in stating that this information is sufficient to perform IA:
\begin{lemma}\label{lem1}
In order for the IA computation unit to compute interference-aligning precoders ${\bf V}_1,\ldots, {\bf V}_K$, it is sufficient that each receiver $i\in\{1,\ldots,K\}$ feeds back a point on the Grassmann manifold ${\mathcal G}_{(K-1)M,N}$ representing the row space of ${\bf H}_i$.
\end{lemma}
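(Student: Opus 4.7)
The plan is to show that the alignment equation (\ref{concatenated_IA_condition}), viewed as a system of constraints on the precoders ${\bf V}_1, \ldots, {\bf V}_K$, is invariant under the action of replacing each ${\bf H}_i$ by any other matrix with the same row space. This is precisely the statement that the precoders depend on ${\bf H}_i$ only through a representative of its Grassmannian equivalence class, so that feeding back a point of $\mathcal{G}_{(K-1)M,N}$ is sufficient.

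First, I would invoke the standard linear-algebra fact that two matrices ${\bf H}_i$ and $\tilde{\bf H}_i$ in $\mathbb{C}^{N \times (K-1)M}$ with full row rank $N$ span the same row space if and only if $\tilde{\bf H}_i = {\bf T}_i {\bf H}_i$ for some invertible ${\bf T}_i \in \mathbb{C}^{N \times N}$. Under the assumption $(K-1)M \geq N$ and the genericity of the channel coefficients assumed in Section \ref{sec:model}, ${\bf H}_i$ almost surely has full row rank, so its row space is a well-defined point of $\mathcal{G}_{(K-1)M,N}$.

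Next, substituting ${\bf H}_i = {\bf T}_i^{-1} \tilde{\bf H}_i$ into (\ref{concatenated_IA_condition}) rewrites the alignment condition as $({\bf T}_i^{-\rm H} {\bf U}_i)^{\rm H} \tilde{\bf H}_i {\bf V}_{-i} = {\bf 0}$. Defining $\tilde{\bf U}_i \triangleq {\bf T}_i^{-\rm H} {\bf U}_i$, which has the same rank $d$ as ${\bf U}_i$ because ${\bf T}_i$ is invertible, this becomes $\tilde{\bf U}_i^{\rm H} \tilde{\bf H}_i {\bf V}_{-i} = {\bf 0}$. The key observation is that $({\bf V}_1, \ldots, {\bf V}_K)$ solves the alignment system for $\{{\bf H}_i\}$ if and only if the same tuple solves the analogous system for any row-space-equivalent representation $\{{\bf T}_i {\bf H}_i\}$; only the receive filters are transformed by ${\bf T}_i^{-\rm H}$, and since these are not computed at the IA unit, this rescaling is immaterial for the precoder computation.

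The final step is to dispose of the rank condition ${\rm rank}({\bf U}_i^{\rm H} {\bf H}_{ii} {\bf V}_i) = d$ in (\ref{ia_definition}), which involves ${\bf H}_{ii}$ and is therefore not determined by ${\bf H}_i$. This is the main subtlety of the argument, but it is not an obstacle: the condition holds generically once the alignment equations are satisfied, and each receiver has local access to its own direct channel ${\bf H}_{ii}$ and can compute ${\bf U}_i$ itself after receiving ${\bf V}_i$. Consequently, the IA computation unit needs only enough information to solve the alignment equations, which by the preceding two paragraphs depends on each ${\bf H}_i$ only through its row space, i.e. through a single point of $\mathcal{G}_{(K-1)M,N}$ per receiver.
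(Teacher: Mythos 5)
Your argument is correct and follows essentially the same route as the paper's proof: you replace ${\bf H}_i$ by a row-space-equivalent matrix related by an invertible transformation (your ${\bf T}_i$ plays the role of the paper's ${\bf C}_i$, acting on ${\bf H}_i^{\rm H}$ instead of ${\bf H}_i$), observe that the alignment conditions are preserved with the receive filters absorbed into the transformation, and then dispose of the rank condition on ${\bf U}_i^{\rm H}{\bf H}_{ii}{\bf V}_i$ by invoking genericity, since ${\bf H}_{ii}$ is independent of ${\bf H}_i$ and the transformed filter remains full column rank. The only cosmetic difference is the direction of the change of variables; the substance is identical.
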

\begin{proof}
Let us consider perfect feedback of the row space of ${\bf H}_i, \, \forall i\in\{1,\ldots,K\}$. Practically, since a linear subspace can be represented by any matrix whose columns span the same space, the Grassmannian feedback considered here can be considered to take the form of the availability at the IA computation unit of a matrix ${\bf F}_i$ of dimensions $(K-1)M \times N$ whose columns span the same subspace as the columns of ${\bf H}_i^{\rm H}$ (we assume that ${\bf H}_i^{\rm H}$ has full column rank, which is a.s. the case for generic channel coefficients). We now show that the IA transmit precoders computed by assuming ${\bf F}_i^{\rm H}$ as channel coefficients are interference-aligning for the true channel as well.

Let us consider an IA solution based on ${\bf F}_i^{\rm H}$, i.e. assume that there exist full-rank matrices ${\bf U}_i$ and ${\bf V}_{i}$ such that the following equation (similar to \eqref{concatenated_IA_condition}),
\begin{equation}\label{Grassmann_feedback_IA_solution}
{\bf U}_i^{\rm H} {\bf F}_i^{\rm H}{\bf V}_{-i}={\bf 0},
\end{equation}
is fulfilled for all $i \in \{1,\ldots,K\}$. Note that since ${\bf H}_i$ and ${\bf F}_i^{\rm H}$ have the same dimensions, the feasibility (a.s.) of IA according to \eqref{concatenated_IA_condition} and \eqref{Grassmann_feedback_IA_solution} is identical. 
Furthermore, since the columns of ${\bf H}_i^{\rm H}$ and ${\bf F}_i$ span the same $N$-dimensional subspace, there exists an invertible $N \times N$ matrix ${\bf C}_i$ such that ${\bf H}_i^{\rm H} =  {\bf F}_i {\bf C}_i$. Clearly, 
\begin{eqnarray}\label{Grassmann_feedback_eq}
\eqref{Grassmann_feedback_IA_solution} &\Leftrightarrow &{\bf U}_i^{\rm H} {\bf C}_i^{\rm -H} {\bf C}_i^{\rm H} {\bf F}_i^{\rm H}{\bf V}_{-i}={\bf 0}  \\
& \Leftrightarrow &  \left( {\bf C}_i^{\rm -1} {\bf U}_i\right)^{\rm H} {\bf H}_i {\bf V}_{-i}={\bf 0}.\label{IA_transformed_filters}
\end{eqnarray}
Comparing to \eqref{ia_definition1}, eq.~\eqref{IA_transformed_filters} shows that the rank-$d$ matrices ${\bf C}_i^{\rm -1} {\bf U}_i$, $i \in \{1,\ldots,K\}$ cancel the interference at all receivers, i.e. the transmit precoders ${\bf V}_1,\ldots {\bf V}_K$ forming the block-diagonal of ${\bf V}_{-i}$ are interference-aligning over the true channels. 
\end{proof}

\subsection{Feedback dimension analysis}
\label{section_feedbackdimension}

As already noted, the CSI feedback scheme considered here, is analogous to feeding back a single point on the Grassmann manifold ${\mathcal G}_{(K-1)M,N}$ for each one of the $K$ users. 
Using the fact that the real dimension of ${\mathcal G}_{n,d}$ is $2d(n-d)$ for any $d\leq n$  \cite{dai}, the real dimension of the feedback variable in the strategy of Lemma~\ref{lem1} is $N_{\rm G}=2N((K-1)M-N)$. For comparison, let us consider the following alternative CSI representations:
\begin{itemize}
\item {\bf Full channel matrix (FCM):} for a given receiver $i$, the $K-1$ channel matrices ${\bf H}_{ij}$, $j \neq i$ appearing in \eqref{ia_definition1} taken together have real dimension $N_{\rm FCM}=2(K-1)MN$.
\item {\bf Individually normalized channel matrices (INM):} in \cite{KimMoonLeeLee_CSI_quantization_MIMOIC_TWC2012}, it is proposed to independently vectorize and normalize the matrices representing the channels from each interferers. At each receiver $i$, this technique yields $K-1$ unit-norm vectors ${\bf z}_{ij}=\frac{{\rm vec}({\bf H}_{ij})}{||{\rm vec}({\bf H}_{ij})||_2}$, $j\neq i$, which are subsequently quantized jointly on the composite Grassmann manifold ${\mathcal G}_{M N,1}^{K-1}$. The real dimension of this manifold is $N_{\rm INM}=2(K-1)(MN-1)$ \cite{rajac}.
\item {\bf Jointly normalized channel matrices (JNM)}\footnote{This approach was proposed by an anonymous reviewer of a previous version of this paper. We thank the reviewer for his suggestion.}: noting that \eqref{concatenated_IA_condition} can be rewritten as $\left({\bf V}_{-i}^{\rm H} \otimes {\bf U}_i^{\rm H}\right) {\rm vec}({\bf H}_{i})={\bf 0}$, this approach consists in quantizing ${\rm vec}({\bf H}_{i})/||{\rm vec}({\bf H}_{i})||_2$ on ${\mathcal G}_{(K-1)M N,1}$. The real dimension of the fed back variable for this case is $N_{\rm JNM}=2((K-1)M N-1)$.
\end{itemize}
It is straightforward to establish that $N_{\rm INM} \leq N_{\rm JNM} \leq  N_{\rm FCM}$ for all meaningful cases ($K\geq 2$). Furthermore, $N_{\rm G} < N_{\rm INM}$ iff $N^2 > K-1$. Note that this condition holds independently of the number of transmit antennas. In the particular case of a square system ($M=N$), we have the following result:
\begin{lemma} \label{lemma_NG_NINM}
In a square system, if IA is feasible, then $N_{\rm G} < N_{\rm INM}$, i.e. the proposed scheme always requires strictly less real dimensions than FCM, INM or JNM.
\end{lemma}
\begin{proof}
A necessary condition for IA to be feasible is \cite{Yetis_Gou_Jafar_Kayran_feasibility_of_IA09}
\begin{equation}\label{eqfs}
d\leq \frac{M+N}{K+1}.
\end{equation}
Together with the assumption that $M=N$ and using the fact that $d\geq1$, \eqref{eqfs} yields
\begin{equation}
K\leq \frac{2N}{d}-1< 2N. \label{eq_Km2N}
\end{equation}
Another necessary condition for IA feasibility is $N\geq 2d$, therefore $N>1$ and consequently $2N<N^2+1$. Combining with \eqref{eq_Km2N}, we obtain $K<N^2+1$,
which is equivalent to $N_{\rm G} < N_{\rm INM}$.
\end{proof}
\mbox{}\\

Note that the feedback scheme outlined here for the MIMO IC is in fact directly applicable to many other channel models where IA has been proposed, such as interfering multiple-access channels \cite{Suh_Tse_IA_cellular_allerton08,Sun_Liu_Zhu_DoF_cellular_network_Science_China_2010}, interfering broadcast channels \cite{Suh_Ho_Tse_downlink_IA_2010,Park_Lee_interfering_MISO_Globecom09}, as well as partially connected interference networks \cite{Guillaud_Gesbert_Globecom2011,Lee_Park_Kim_partially_connected_MIMO_IC_Globecom09}. \\

\section{Quantized CSI Feedback}
\label{sec:LF}
In this section we introduce a transmission scheme where the alignment equations are solved based on the (error-free) feedback of a quantized version of the CSI, based on the Grassmannian representation from Section~\ref{sec:IA}. For that scheme, we show in Section~\ref{section_proposed_quantized_feedback} how inter-user interference is related to the CSI codebook size, and characterize the scaling of the codebook size which ensures that the inter-user interference power remains bounded at high SNR.
For comparison, in Section~\ref{quantized_naive_feedback}, we provide a similar analysis for the INM technique. 

\subsection{Quantized feedback for the proposed scheme}
\label{section_proposed_quantized_feedback}
Let us assume that receiver $i$ knows perfectly the state of its channels from all interfering transmitters, i.e. the coefficients of ${\bf H}_i$, and performs the economy-size QR decomposition ${\bf H}_i^{\rm H}={\bf F}_i{\bf C}_i$, where ${\bf F}_i$ is a $(K-1)M \times N$ truncated unitary matrix, and ${\bf C}_i$ is $N\times N$ and a.s. invertible, under the prevailing channel assumptions.
The use of the QR decomposition is a particular case of the decomposition used in the proof of Lemma~\ref{lem1}: it ensures that ${\bf H}_i^{\rm H}$ and ${\bf F}_i$ have the same column space, and adds the requirement that the columns of ${\bf F}_i$ are orthonormal, which will simplify the subsequent analysis.
According to the proposed scheme, receiver $i$ quantizes the subspace spanned by the columns of ${\bf F}_i$ using $B_{\rm G}$ bits and feeds the index of the quantized codeword back to the unit in charge of computing the ${\bf V}_i$'s. We further assume that the receivers and the computation unit share a predefined codebook\footnote{For notational simplicity we omit the dependency of $\mathcal {S}$ on $i$, however the proposed analysis generalizes trivially to cases where ${\mathcal S}$ and $B_{\rm G}$ are different across the receivers, as will be seen in Section \ref{Asydof}.} ${\mathcal S}=\{{\bf S}_1,\ldots,{\bf S}_{2^{B_{\rm G}}}\}$ which is composed of $2^{B_{\rm G}}$ truncated unitary matrices of size $(K-1)M\times N$ and is designed via Grassmannian subspace packing \cite{Conway1996_Grassmannian_packing}. The quantized codeword at receiver $i$ is the point in ${\mathcal S}$ closest to ${\bf F}_i$, i.e.
\begin{equation}\label{eq_quantizer}
{\hat {\bf F}}_i=\mathrm{arg} \min_{{\bf S} \in {\mathcal S}} \,\,\,d_{c}({\bf S},{\bf F}_i)
\end{equation}
in which $d_c({\bf X},{\bf Y})=\frac{1}{\sqrt{2}}\left|\left| {\bf X}{\bf X}^{\rm H} - {\bf Y}{\bf Y}^{\rm H}\right|\right|_{\mathrm{F}}$ is the chordal distance between ${\bf X}$ and ${\bf Y} $ in ${\mathcal G}_{(K-1)M,N}$ \cite{rajathesis}. \\
\\
Let us consider the scheme where the interference alignment problem is solved at the IA computation unit based on the quantized CSI $\{{\hat {\bf F}}_i^{\rm H}\}_{i=1}^K$, yielding full-column rank matrices $(\{{\bf V}_i\}_{i=1}^K,\{{\tilde {\bf U}}_i\}_{i=1}^K)$ fulfilling
\begin{equation}\begin{split}\label{e8}
{\tilde {\bf U}}_i^{\rm H}{\hat {\bf F}}_i^{\rm H}{\bf V}_{-i}={\bf 0}, \,\,\,\forall i \in \{1,\ldots,K\}.   
\end{split}\end{equation}
At receiver $i$, 
inspired by the perfect feedback situation, we consider the receive filter ${\bf G}_i={\bf C}_i^{-1}{\bf F}_i^{\rm H}{\hat {\bf F}}_i{\tilde {\bf U}}_i$ \footnote{We note that if the quantization error is null, i.e. $d_c({\hat {\bf F}}_i,{\bf F}_i)=0$, then ${\bf F}_i^{\rm H}{\hat {\bf F}}_i$ is a unitary matrix corresponding to the uncertainty between the CSI encoder (at the receiver) and decoder (at the IA computation unit) in the matrix representation of the subspace being fed back.}.
Let ${{\bf{y}}'_i}$ denote the received signal at receiver $i$ after processing by ${\bf G}_i$:
\begin{equation}\label{Ehh1}
	{{\bf{y}}'_i}={\bf G}_i^{\rm H}{{\bf{y}}_i} = {\bf G}_i^{\rm H}{{{{\bf H}}}_{ii}}{{{\bf V}}_i}{{\bf{x}}_i} + {\bf e}_i  + {\bf G}_i^{\rm H}{{\bf{n}}_i},
\end{equation}
where the term
\begin{equation}\begin{split}
{\bf e}_i= \sum_{ \substack{1\leq j\leq K\\ j \ne i}}{{\bf G}_i^{\rm H}{{{{\bf H}}}_{ij}}{{{{\bf V}}}_j}{{\bf{x}}_j}}
\end{split}\end{equation}
is the interference leakage due to the imperfect CSI.\\

Generally speaking, the aim of our analysis is to provide sufficient conditions on the CSI quantization accuracy to ensure that ${\mathcal I }({{\bf{x}}_i};{{\bf{y}}_i})$ grows with $d \log P$ (see Section \ref{sec:rateanal}); ${\bf G}_i$ and ${{\bf{y}}'_i}$ are merely intermediate variables used to establish information-theoretic inequalities. In a practical system, we expect the equalizer ${\bf G}_i$ to be computed through classical channel estimation and equalization techniques -- we omit these details here.\\

In the remainder of this section, we will focus on establishing bounds on the interference power $L_i= {\rm tr}\left( {\rm E}_{\bf x}({\bf e}_i{\bf e}_i^{\rm H}) \right)$; these results will be instrumental in proving our DoF result in Section~\ref{sec:rateanal}. We first establish in Lemma~\ref{thm_boundedleakage} and Corollary~\ref{corl1} the growth rate of the number of feedback bits with the SNR which guarantees that $L_i$ remains bounded by a constant regardless of $P$ when $P\rightarrow \infty$. 

\begin{lemma}\label{thm_boundedleakage}
The interference leakage power (due to imperfect CSI) at receiver $i$ can be bounded as
\begin{equation}\begin{split}\label{e293}
L_i\leq \frac{8P}{{(c\, 2^{B_{\rm G}})}^{\frac{2}{N_{\rm G}}}} \left(1+ o\left(2^{-\frac{B_{\rm G}}{N_{\rm G}}}\right)\right)
\end{split}\end{equation}
where $N_{\rm G}=2N((K-1)M-N)$ is the real dimension of ${\mathcal G}_{(K-1)M,N}$ introduced before, and $c$ is the coefficient of the ball volume in the Grassmann manifold,
\begin{equation}\label{cequ}
c\triangleq \frac{1}{\big(N((K-1)M-N)\big)!}\frac{\prod_{i=1}^N\big((K-1)M-i\big)!}{\prod_{i=1}^N\big(N-i\big)!}.
\end{equation} 
\end{lemma}
\begin{proof}
See appendix \ref{appendix_leakage_bound}.
\end{proof}

\begin{corollary}\label{corl1}
Quantizing CSI with 
\begin{equation}\label{eqa}
B_{\rm G} = N((K-1)M-N)\log P
\end{equation}
bits is sufficient to keep the interference leakage $L_i$ bounded by a constant for arbitrarily large $P$.
\end{corollary}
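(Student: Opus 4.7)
The plan is to substitute the proposed scaling $N_f = N((K-1)M-N)\log P$ directly into the leakage bound established in Theorem~\ref{thm_boundedleakage} and verify that the resulting expression is $O(1)$ in $P$. Since $N_G = 2N((K-1)M-N)$, the choice of $N_f$ satisfies $\frac{2N_f}{N_G} = \log P$, which is precisely the exponent needed to cancel the factor $P$ appearing in front of $\Delta^2$ in \eqref{e293}.

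More concretely, I would first expand $\Delta^2 = \frac{4}{c^{2/N_G}\, 2^{2N_f/N_G}}$ using the definition of $\Delta$ given in the statement of Theorem~\ref{thm_boundedleakage}. Substituting the prescribed $N_f$ gives $2^{2N_f/N_G} = 2^{\log P} = P$, so that
\begin{equation*}
2P\Delta^2 = \frac{8P}{c^{2/N_G}\, P} = \frac{8}{c^{2/N_G}},
\end{equation*}
which is a constant depending only on $K$, $M$, and $N$ (through $c$ and $N_G$), but not on $P$.

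It then remains to argue that the $o(2^{-N_f/N_G})$ correction factor in \eqref{e293} does not spoil the boundedness: since $2^{-N_f/N_G} = P^{-1/2}\to 0$ as $P\to\infty$, the multiplicative correction $(1+o(2^{-N_f/N_G}))$ tends to $1$. Combining these two observations gives $L_i \le \frac{8}{c^{2/N_G}}(1+o(1))$, which is bounded uniformly in $P$, as claimed.

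The proof is essentially an algebraic substitution, so there is no serious obstacle; the only care needed is to verify that the exponent $N_G/2 = N((K-1)M-N)$ in the scaling rule matches exactly what is required to make $P\Delta^2$ constant, and to confirm that the asymptotic vanishing of the $o(\cdot)$ term is automatic once $N_f$ grows (even logarithmically) with $P$.
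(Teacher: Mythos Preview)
Your argument is correct and follows essentially the same approach as the paper's own proof, which simply observes that the $o(2^{-N_f/N_G})$ term vanishes as $P\to\infty$ and that $L_i$ in \eqref{e293} is bounded once $2^{2N_f/N_G}$ scales linearly with $P$, i.e.\ once $N_f=\frac{N_G}{2}\log P=N((K-1)M-N)\log P$. Your version is slightly more explicit in computing the resulting constant $8/c^{2/N_G}$, but the logic is identical.
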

\begin{proof}
From \eqref{e293}, since $o\left(2^{-\frac{B_{\rm G}}{N_{\rm G}}}\right) \rightarrow 0$ for large $P$, it is obvious that $L_i$ is bounded by a constant if $2^{\frac{2B_{\rm G}}{N_{\rm G}}}$ scales at least linearly with $P$; in particular this holds for
\begin{equation}
B_{\rm G}=\frac{N_{\rm G}}{2}\log  P=N((K-1)M-N)\log P.
\end{equation}
\end{proof}

\subsection{Quantized feedback for the INM method}
\label{quantized_naive_feedback}
For comparison, let us now consider quantization for the INM method\footnote{The authors of \cite{KimMoonLeeLee_CSI_quantization_MIMOIC_TWC2012} attribute this method to \cite{rajac}. Although quantization bounds for the composite Grassmann manifold are presented in \cite{rajac}, we note that the (frequency-selective) channel model in that paper is different from the flat-fading model considered here and in \cite{KimMoonLeeLee_CSI_quantization_MIMOIC_TWC2012}, and therefore the results are not immediately comparable.} chosen as a baseline in  \cite{KimMoonLeeLee_CSI_quantization_MIMOIC_TWC2012}. We recall that in that case, at receiver $i$ the matrices representing the channels from the interferers are vectorized and normalized independently, yielding $K-1$ unit-norm vectors ${\bf z}_{ij}=\frac{{\rm vec}({\bf H}_{ij})}{||{\rm vec}({\bf H}_{ij})||_2}$, $j\neq i$. ${\bf Z}_{i}=[{\bf z}_{i1},\ldots,{\bf z}_{i\,i-1},{\bf z}_{i\,i+1},\ldots,{\bf z}_{iK}] \in {\mathcal G}_{M N,1}^{K-1}$ is subsequently quantized according to
\begin{equation}
{\bf \hat Z}_{i}=\mathrm{arg} \min_{{\bf T} \in {\mathcal T}} \,\,\,D_{c}({\bf T},{\bf Z}_{i}),
\end{equation}
where $D_c({\bf T},{\bf Z}_{i})=\sqrt{\rm{tr}\left({\bf I}_{K-1}-{\bf T}^{\rm H} {\bf Z}_{i} \right)}$ is the chordal distance defined for the composite Grassmann manifold. Let $B_{\rm INM}$ denote the number of feedback bits, i.e. $|\mathcal T|=2^{B_{\rm INM}}$.
At the transmitter side, the columns of ${\bf \hat Z}_{i}=[{\bf \hat z}_{i1},\ldots,{\bf \hat z}_{i\,i-1},{\bf \hat z}_{i\,i+1},\ldots,{\bf \hat z}_{iK}]$ are used to reconstruct the quantized CSI: the channel matrices ${ \bf \hat H}_{ij}$ used for the computation of the precoders are such that
${\rm vec}({ \bf \hat H}_{ij}) ={\bf \hat z}_{ij}, \, \forall i\neq j$.
The interference alignment problem is then solved based on ${ \bf \hat H}_{ij}$ to find $(\{{\bf V}_i\}_{i=1}^K,\{{\tilde {\bf U}}_i\}_{i=1}^K)$ fulfilling
\begin{equation}\begin{split}\label{err8}
{\tilde {\bf U}}_i^{\rm H}{ \bf \hat H}_{ij}{\bf V}_{j}={\bf 0}, \,\,\,\forall i,j \in \{1,\ldots,K\},\,\, j\neq i.   
\end{split}\end{equation}

We now show that the leakage ${\bar L}_i=\frac{P}{d}||{\tilde {\bf U}}_i^{\rm H}{ \bf H}_{ij}{\bf V}_{j}||_{\rm F}^2 $ (using the true channel matrices) can remain bounded for arbitrarily large transmit power $P$ under certain conditions. This is the object of Lemma~\ref{inlem2}, where we establish the scaling of $B_{\rm INM}$ with $P$ required to achieve bounded interference leakage under this scheme.

\begin{lemma}\label{inlem2} 
Using the INM quantization scheme, quantizing ${\bf Z}_{i}$ with $B_{\rm INM}=\frac{1}{2}N_{\rm INM}{\rm log} P=(K-1)(MN-1){\rm log} P$ bits is sufficient to keep ${\bar L}_i$ bounded for arbitrarily large $P$. 
\end{lemma}
\begin{proof}
See appendix \ref{appendix_scaling_naive}.
\end{proof}

Comparing the above result with the scaling obtained in Corollary~\ref{corl1} for the proposed scheme indicates that at high SNR, $B_{\rm G} < B_{\rm INM}$ (i.e. the proposed method strictly outperforms INM) iff $N_{\rm G} < N_{\rm INM}$. As already analyzed in Lemma~\ref{lemma_NG_NINM}, this condition is fulfilled for many case of practical interest.\\

\section{Achievable DoF and Rate Analysis}
\label{sec:rateanal}

\subsection{Rate and DoF loss due to CSI Quantization}

In the previous section, we have used interference leakage as a proxy to evaluate how the quality of the available CSI influences alignment. Note however that having a bounded interference leakage is not sufficient in itself to ensure that the full DoF is achieved for asymptotically large $P$ -- in fact, the power of the signal of interest remaining after processing by the receive filter (eq.~\eqref{Ehh1}) could remain bounded too, or the equivalent channel ${\bf G}_i^{\rm H}{{{{\bf H}}}_{ii}}{{{\bf V}}_i}$ could be rank-deficient.  We now show that this is almost surely not the case, and that the proposed CSI quantization scheme achieves the same DoF as IA under the perfect CSI assumption, provided that the proper scaling of $B_{\rm G}$ with $P$ is respected:

\begin{theorem}\label{thm_scalingd}
If IA with $d$ DoF is feasible, the proposed CSI quantization scheme achieves $d$ DoF for almost all channel realizations if $B_{\rm G}$ is scaled according to \eqref{eqa}.
\end{theorem}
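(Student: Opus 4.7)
The plan is to directly chain the two preceding lemmas: the achievable rate under interference-free filtering (Lemma \ref{lem_dof_no_intf}) gives a lower bound on the useful signal growth rate, and the leakage bound (Lemma \ref{lemma_LB_rate}) lets us control the penalty incurred by residual interference. Specifically, I would show that with the scaling \eqref{eqa}, the penalty term in \eqref{in24} is bounded by a constant independent of $P$, so that $R_q^i$ inherits the $d\log P$ growth of $R_p^i$.

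First I would substitute \eqref{eqa} into $\Delta$: since $N_f=\tfrac{N_G}{2}\log P$, we get $2^{2N_f/N_G}=P$, and therefore
\begin{equation}
P\,\Delta^2 \;=\; \frac{4P}{c^{2/N_G}\,2^{2N_f/N_G}} \;=\; \frac{4}{c^{2/N_G}},
\end{equation}
which is a constant. The $o(2^{-N_f/N_G})$ factor in \eqref{in24} vanishes as $P\to\infty$. Consequently the penalty term in Lemma \ref{lemma_LB_rate} satisfies
\begin{equation}
d \log\!\left(1+\frac{2P}{d}\Delta^2\big(1+o(2^{-N_f/N_G})\big)\right) \;\xrightarrow[P\to\infty]{}\; d \log\!\left(1+\frac{8}{d\,c^{2/N_G}}\right),
\end{equation}
a finite constant $\kappa$ depending only on $K,M,N,d$.

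Next I would invoke Lemma \ref{lem_dof_no_intf}, which (along the compactness-extracted subsequence of codebooks of growing size) gives $R_p^i/\log P \to d$. Combining this with $R_q^i \geq R_p^i - \kappa + o(1)$ yields
\begin{equation}
\liminf_{P\to\infty} \frac{R_q^i}{\log P} \;\geq\; d,
\end{equation}
for each $i$. Since $d$ is also a trivial upper bound on user $i$'s DoF (the precoder $\mathbf{V}_i$ has only $d$ columns), equality holds, and summing over the $K$ users yields the total DoF $Kd$ predicted by IA under perfect CSI.

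The only subtle point, and the one I would emphasize, is that Lemma \ref{lem_dof_no_intf} delivered its conclusion along a subsequence $\{P_{g(m)}\}$ extracted by compactness rather than for every $P$. This is nonetheless sufficient for DoF achievability, which only requires exhibiting a sequence of codebook/power pairs along which the rate grows at the claimed slope; I would make this explicit so the reader does not expect a pointwise-in-$P$ statement. Everything else (the substitution of the scaling law, the vanishing of $o(\cdot)$, and the monotonicity arguments hidden in $A^i\geq 0$) is bookkeeping once the constant bound on $P\Delta^2$ has been established.
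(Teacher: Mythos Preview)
Your proposal is correct and follows essentially the same route as the paper's own proof: substitute the scaling \eqref{eqa} into $\Delta$ to make $P\Delta^2$ constant, observe the $o(\cdot)$ term vanishes, and then divide \eqref{in24} by $\log P$ and appeal to Lemma~\ref{lem_dof_no_intf}. Your added remarks about the trivial upper bound and the subsequence caveat from Lemma~\ref{lem_dof_no_intf} are in fact more careful than the paper, which silently writes $\lim_{P\to\infty}$ without flagging that Lemma~\ref{lem_dof_no_intf} only delivers convergence along an extracted subsequence.
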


\emph{Remark 1:} Theorem~\ref{thm_scalingd} is not restricted to a particular distribution of the channel coefficients. The restriction to ``almost all'' channel realizations is due to the fact that under the assumptions of Section~\ref{sec:model}, there can exist a vanishing set of channel realizations for which \eqref{ia_definition2} is not fulfilled;  this is also the case when perfect CSI is considered \cite{Gou_Jafar_DoF_MIMO_Kuser_IC_IT2010}, and is unrelated to the proposed quantization scheme.\\

\emph{Remark 2:} The transmission scheme considered here is based on truncated unitary precoders ${\bf V}_j$, and therefore the transmitted signal is spatially white inside the $d$-dimensional subspace defined by the precoder. Clearly, this is suboptimal for finite values of the SNR, and spatial waterfilling in addition to IA would bring in performance improvement for $d>1$. However, we remark that the performance gains of waterfilling vanish at asymptotically high SNR, provided that the channel is not rank deficient \cite{Moon_waterfilling_high_SNR_TCom_2011}. Therefore, the asymptotic analysis of this section holds regardless of whether spatial waterfilling is used in addition to IA or not.\\

Theorem \ref{thm_scalingd} states that $\lim_{P\rightarrow \infty} \frac{R_i}{\log P}=d$ a.s.; in order to show this, we require a few intermediate results. Let us define the following values: $R_i\triangleq {\mathcal I }({{\bf{x}}_i};{{\bf{y}}_i})$, $R_i'\triangleq {\mathcal I }({{\bf{x}}_i};{{\bf{y}}_i'})$ and $R_i''\triangleq \log \left| {{\bf G}_i^{\rm H}{\bf G}_i + \frac{P}{d}{\bf Q}_{\rm S}^i} \right|$ where ${\bf Q}_{\rm S}^i={\bf G}_i^{\rm H}{\bf H}_{ii}{\bf V}_{i}{\bf V}_{i}^{\rm H}{\bf H}_{ii}^{\rm H}{\bf G}_i$ is the covariance of the signal of interest.
From the data processing inequality and the definition of ${\bf{y}}_i'$, we have immediately that $R_i \geq R_i'$. In what follows, we will successively show that $R_i'' - R_i'$ remains bounded from above if $B_{\rm G}$ is scaled according to \eqref{eqa} (Lemma~\ref{lemma_LB_rate}), and that $\lim_{P\rightarrow \infty} \frac{R_i''}{\log P}=d$ (Lemma~\ref{lem_dof_no_intf}).
Let us start with the first result. 
Since all signal and noise terms are Gaussian circularly symmetric, we have
\begin{equation}
R_i'=\log \left| {{\bf G}_i^{\rm H}{\bf G}_i + \frac{P}{d}({\bf Q}_{\rm S}^i+{\bf Q}_{\rm I}^i)} \right|-\log \left| {{\bf G}_i^{\rm H}{\bf G}_i + \frac{P}{d}{\bf Q}_{\rm I}^i} \right| \label{rate_quantized}
\end{equation}
in which ${\bf Q}_{\rm I}^i={\bf G}_i^{\rm H}{\bf H}_i{\bf V}_{-i}{\bf V}_{-i}^{\rm H}{\bf H}_i^{\rm H}{\bf G}_i$ is the covariance of the residual interference.

\begin{lemma} \label{lemma_LB_rate}
Under the quantization scheme of Section~\ref{section_proposed_quantized_feedback}, the difference between $R_i'$ and $R_i''$ can be bounded as
\begin{equation}\begin{split}\label{in24}
R_i''-R_i' \leq d \log \left(||{\bf C}_i^{-1}||_2^2+ \frac{8P}{d {(c\, 2^{B_{\rm G}})}^{\frac{2}{N_{\rm G}}}}\left(1+o\left(2^{-\frac{B_{\rm G}}{N_{\rm G}}}\right)\right) \right).
\end{split}\end{equation}
\end{lemma}
\begin{proof}
See Appendix~\ref{appendix_proof_LB_rate}.
\end{proof}

\begin{lemma}\label{lem_dof_no_intf}
Under the proposed CSI quantization scheme, there exists a series of codebooks of increasing size following \eqref{eqa} for $P\rightarrow \infty$ s.t. $\lim_{P\rightarrow \infty} \frac{R_i''}{\log P}=d$ a.s.
\end{lemma}
\begin{proof}
See Appendix~\ref{appendix_proof_dof_no_intf}.
\end{proof}
\mbox{}\\*

We are now in the position to prove Theorem \ref{thm_scalingd}:
\begin{proof}
Substituting $P=2^{\frac{2B_{\rm G}}{N_{\rm G}}}$ in the result of Lemma~\ref{lemma_LB_rate} yields
\begin{align}
R_i'& \geq R_i''-d \log \left(||{\bf C}_i^{-1}||_2^2+  \frac{8}{c^{2/N_{\rm G}} d}\left(1+o\left(2^{-\frac{B_{\rm G}}{N_{\rm G}}}\right)\right) \right).
\end{align}
As $P\rightarrow \infty$ and with $B_{\rm G}$ following \eqref{eqa}, the argument of the logarithm remains bounded by a constant, therefore
\begin{equation}
\lim_{P \rightarrow \infty} \frac{R_i'}{\log P} \geq  \lim_{P \rightarrow \infty} \frac{R_i''}{\log P}   = d \quad \mathrm{a.s.} \label{rank_ineq}
\end{equation}
using the result from Lemma~\ref{lem_dof_no_intf}.
\end{proof}
\mbox{}\\*

\subsection{Per-User DoF for Asymmetric Feedback}\label{Asydof}
An interesting consequence of the rate-loss analysis conducted previously can be observed when each receiver uses its own scaling of the CSI quantization codebook size with $P$. Formally, let $B_{\rm G}^i$ denote the number of bits used by receiver $i$ to quantize ${\bf F}_i$.
\begin{corollary}
If $B_{\rm G}^i$ scales with $P$ such that
\begin{equation}
\alpha_i \triangleq \lim_{P \rightarrow \infty} \frac{B_{\rm G}^i}{N_{\rm G}/2 \cdot \log P}
\end{equation}
exists and is finite, then the DoF achievable by user $i$ is 
\begin{equation}
d_i^{\rm q}\geq d_i^{\rm p} \min\left( \alpha_i,1\right),
\end{equation}  
where $d_i^{\rm p}$ is the achievable DoF of this user with perfect CSI.
\end{corollary}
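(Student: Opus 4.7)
The plan is to apply Lemma~\ref{lemma_LB_rate} separately at each receiver with its own codebook size $N_f^i$. Inspecting the proof of that lemma, the bound on $R_p^i - R_q^i$ depends only on the quantization error $d_c^2(\hat{\bf F}_i,{\bf F}_i)$ at receiver $i$ and on $P$; it is decoupled across users. Hence the same inequality holds per user,
\[
R^i_q \geq R^i_p - d\log\!\left(1 + \tfrac{2P}{d}\Delta_i^2\bigl(1+o(2^{-N_f^i/N_G})\bigr)\right),
\]
with $\Delta_i = 2/(c\cdot 2^{N_f^i})^{1/N_G}$.

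Next I would substitute the hypothesized scaling into this bound. Writing $N_f^i = \alpha_i N((K-1)M-N)\log P + o(\log P)$ and using $N_G/2 = N((K-1)M-N)$, one obtains $2^{2N_f^i/N_G} = P^{\alpha_i + o(1)}$, so that
\[
\tfrac{2P}{d}\Delta_i^2 \;=\; \tfrac{8}{d\,c^{2/N_G}}\,P^{\,1-\alpha_i+o(1)}.
\]
This quantity stays bounded when $\alpha_i \geq 1$ and grows polynomially like $P^{1-\alpha_i}$ when $\alpha_i < 1$.

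I would then split into two cases. For $\alpha_i \geq 1$, the entire rate-loss term is bounded as $P\to\infty$; dividing by $\log P$ and invoking Lemma~\ref{lem_dof_no_intf} yields $d_i^q = \liminf_{P\to\infty} R_q^i/\log P \geq d_i^p$. For $\alpha_i < 1$, the rate loss grows like $d(1-\alpha_i)\log P$, and the same limit then gives $d_i^q \geq d_i^p - d(1-\alpha_i) = d_i^p\,\alpha_i$ after using $d = d_i^p$. The two bounds combine into $d_i^q \geq d_i^p\min(\alpha_i,1)$, as claimed.

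The main point to verify carefully is that Lemma~\ref{lem_dof_no_intf} continues to guarantee $R_p^i \sim d\log P$ under the more general scaling $N_f^i \sim \alpha_i N((K-1)M-N)\log P$, rather than the exact scaling in \eqref{eqa}. Its proof extracts a convergent subseries from a sequence of solutions lying in a Cartesian product of compact Grassmannians and then exploits that $\hat{\bf F}_i^{\star\mathrm{H}}{\bf F}_i$ is unitary; neither step uses the precise growth rate of $|\mathcal{S}_n|$, only that $|\mathcal{S}_n|\to\infty$, which is ensured by any $\alpha_i>0$. Hence this extension is routine, and the degenerate case $\alpha_i=0$ corresponds to the trivial bound $d_i^q\geq 0$.
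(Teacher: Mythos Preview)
Your proposal is correct and follows essentially the same route as the paper: the paper's own proof is a single sentence stating that the result ``follows simply from \eqref{in24} by taking the limit of the lower bound when $P\to\infty$,'' which is exactly the computation you carry out in detail (substituting the scaling into $\Delta_i^2$, splitting on $\alpha_i\lessgtr 1$, and dividing by $\log P$). Your explicit check that Lemma~\ref{lem_dof_no_intf} only requires $|\mathcal S_n|\to\infty$ rather than the exact scaling \eqref{eqa} is a useful clarification that the paper leaves implicit.
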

\begin{proof}
The proof follows simply from \eqref{in24} by taking the limit of the lower bound when $P \rightarrow \infty$.
\end{proof}

Practically, this means that the DoF achieved by a given user is independent of the quality of the feedback provided by the other users, and depends only on the scaling of its own feedback.
This observation, obtained here for IA precoding, is consistent with the results obtained in \cite{deKerret_Gesbert_DoF_distributed_CSI_IT12} for centralized schemes using different precoding schemes such as zero-forcing.\\

\subsection{Average Rate Loss under Random Vector Quantization}
Note that the results established so far hold for any codebook obtained by sphere-packing. Let us now briefly depart from this assumption, and consider RVQ instead. In that case, the previous results do not apply: the random choice of the codebook can lead to arbitrarily bad performance regardless of $B_{\rm G}$, and bounding the performance loss uniformly over all  codebooks is impossible. A more relevant performance metric for RVQ is the average sum rate over all possible codebooks. We have the following result:
\begin{theorem}\label{thm_rvq}
Provided that the codebook $\mathcal{S}$ is generated from independent realizations of a random process uniformly distributed over ${\mathcal G}_{(K-1)M,N}$, the expectation over $\mathcal{S}$ of $R_i$ is lower bounded as
\begin{equation}\begin{split}\label{in28}
{\rm E}_{{\mathcal S}}(R_i)
& \geq R_i''-d \log \left (||{\bf C}_i^{-1}||_2^2+\frac{2P}{d} \frac{\Gamma ({\frac{2}{N_{\rm G}}})}{{\frac{N_{\rm G}}{2}}{(c\, 2^{B_{\rm G}})}^{\frac{2}{N_{\rm G}}}}\right ),
\end{split}\end{equation}
where $\Gamma(\cdot)$ denotes the Gamma function. 
\end{theorem}
\begin{proof}
See Appendix \ref{appendix_proof_rvq}.
\end{proof} 

\mbox{}\\*

\section{Simulation Results}
\label{sec:simulation}

This section presents simulations that numerically validate the results hitherto established. 
Note that constructing good Grassmannian packings for arbitrary dimensions is difficult \cite{Dhillon_Grassmannian_packings_via_projection_expmat08}; therefore, in our simulations for relatively small codebook sizes (up to $2^{15}$) we resort to random codebooks in place of sphere-packing codebooks. Note that the performance expected from RVQ codebooks constitutes a lower bound to the performance of sphere-packing codebooks; however as we shall see, in our simulations, RVQ codebooks attain the performance predicted for the sphere-packing codebooks.  These results are presented in Section~\ref{section_simul_rvq}.

For larger codebooks ($B_{\rm G}>15$), even RVQ is not tractable due to the complexity of the exhaustive search through $\mathcal{S}$ in \eqref{eq_quantizer}.
Due to the lack of structured codebooks allowing a tractable implementation of the quantizer, the performance obtained for larger codebooks is extrapolated by using a perturbation method based on the analytical characterization of the distribution of the quantization error, the details of which being presented in Section~\ref{GMperturbation}.

\subsection{Performance results using RVQ}
\label{section_simul_rvq}
In this section, we evaluate the performance of the quantization scheme of Section~\ref{section_proposed_quantized_feedback} with RVQ codebooks. The performance metric is the sum rate evaluated through Monte-Carlo simulations. 
The sum rate achievable over the MIMO IC using interference alignment precoders under the assumption that the input signals are Gaussian can be written as
\begin{equation}\label{E11}
	{R_{\rm sum}} =    \sum\limits_{i= 1}^K \log \left| {{{\bf I}_{{N}}} + \frac{P}{d}\sum\limits_{j = 1}^K {{{\bf H}_{ij}}{{\bf {\bf V}}_j}{{\bf {\bf V}}_j^ {\rm H}}{\bf H}_{ij}^{\rm H}} } \right|  -   \sum_{i=1}^K \log \left| {{{\bf I}_{{N}}} + \frac{P}{d}\sum\limits_{j=1, j \ne i}^K {{{\bf H}_{ij}}{{\bf {\bf V}}_j}{{\bf {\bf V}}_j^ {\rm H}}{\bf H}_{ij}^{\rm H}} } \right|.
\end{equation}
A $3$-user IC with $M=N=2$ antennas per node and $d=1$ data stream for each transmitter is considered. Entries of the channel matrices are generated according to $\mathcal{CN}(0,1)$ and the performance results are averaged over the channel realizations.
The method proposed in Section~\ref{section_proposed_quantized_feedback} is compared to the INM quantization method from Section~\ref{section_feedbackdimension}.

For the proposed method, the codebook entries are independent $(K-1)M \times N$ random truncated unitary matrices generated from the Haar distribution. For the INM method, random unit norm vectors are used in the codebook construction. Figure \ref{fig:sinr} shows the achievable sum rate versus transmit SNR for $B_{\rm G}=5$ and 10 feedback bits when the precoders are designed based on the quantized feedback.
Clearly the proposed scheme outperforms INM quantization for the same number of feedback bits. It can be also seen that for a fixed number of feedback bits, the sum-rate saturates at high SNR, while it grows unbounded (with the slope equal to the DoF) for the perfect CSI case.\\ 

\begin{figure}
  \centering
  \begin{tikzpicture}[scale=1]
    \renewcommand{\axisdefaulttryminticks}{4}
    \pgfplotsset{every major grid/.append style={densely dashed},every mark/.append style={solid}}
    \tikzstyle{every axis y label}+=[yshift=-10pt]
    \tikzstyle{every axis x label}+=[yshift=5pt]
    \pgfplotsset{every axis legend/.append style={cells={anchor=west},fill=white, at={(1,0)}, anchor=south east,{font=\scriptsize}}}
    \begin{axis}[ 
      xmin=0,
      ymin=2,
      xmax=40,
      ymax=13.5,
      bar width=3pt,
      grid=major,
      scaled ticks=true,
      ylabel={Average sum-rate [bits/s/Hz]},
      xlabel={{\rm SNR} [dB]}
      ]
 
     
     \addplot[red,dashed] plot coordinates{(0,2.7431)    (5,4.1824)    (10,5.4552)    (15,6.3308)    (20,6.8235)    (25,7.0655)    (30,7.1743)  (35,7.2202)    (40,7.2385)};
      
     \addplot[red,dashed,mark=o,every mark/.append style={solid}] plot coordinates{(0,2.8553)    (5,4.4862)    (10,5.9964)    (15,7.0588)    (20,7.6485)    (25,7.9210)    (30,8.0318)    (35,8.0735)    (40,8.0883)};

    \addplot[blue,smooth,mark=x] plot coordinates{(0,2.8696)    (5,4.4656)    (10,5.9372)    (15,6.9886)    (20,7.5938)    (25,7.8901)    (30,8.0197)   (35,8.0726)    (40,8.0938)};
         
     \addplot[blue,smooth,mark=diamond] plot coordinates{(0,3.0309)    (5,4.9674)    (10,6.9936)    (15,8.6285)    (20,9.6730)   (25,10.2307)   (30,10.4918)   (35,10.6027)   (40,10.6465)};

       \addplot[black,smooth,mark=triangle] plot coordinates{(0,3.2634)    (5,5.7347)    (10,9.1077)   (15,13.2112)          } ;

 \legend{ {$B_{\rm G}=5$, INM},{$B_{\rm G}=10$, INM},{$B_{\rm G}=5$, Proposed},{$B_{\rm G}=10$, Proposed}, {Perfect CSI}}
    \end{axis}
  \end{tikzpicture}
  \caption{Average $R_{\rm sum}$ for various quantization methods, for the $3$-user MIMO IC, $N=M=2$.}
  \label{fig:sinr}
\end{figure}
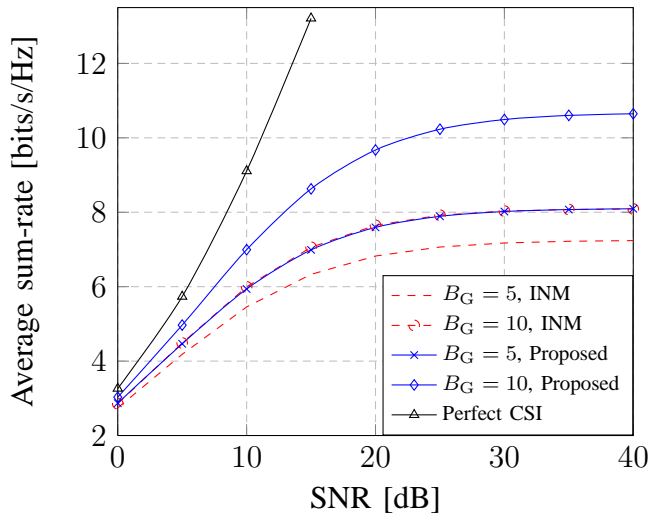

The sum rate in \eqref{E11} is achievable when optimum receivers (not including the projection filters ${\bf G}_i^{\rm H}$) are used at the receivers.
Since the achievable scheme in Section~\ref{sec:LF} is using the projection filters ${\bf G}_i^{\rm H}$, we evaluated the performance achieved by this scheme, defined as
\begin{equation}\label{E121}
	{R'_{\rm sum}} =\sum\limits_{i= 1}^K \log \left| {{\bf G}_i^{\rm H}{\bf G}_i + \frac{P}{d}\sum\limits_{j = 1}^K {{\bf G}_i^{\rm H}{{\bf H}_{ij}}{{\bf {\bf V}}_j}{{\bf {\bf V}}_j^ {\rm H}}{\bf H}_{ij}^{\rm H}{\bf G}_i} } \right| - \sum_{i=1}^K \log \left| {{\bf G}_i^{\rm H}{\bf G}_i + \frac{P}{d}\sum\limits_{j=1, j \ne i}^K {{\bf G}_i^{\rm H}{{\bf H}_{ij}}{{\bf {\bf V}}_j}{{\bf {\bf V}}_j^ {\rm H}}{\bf H}_{ij}^{\rm H}{\bf G}_i} } \right|.
\end{equation}
Results are provided in Figure \ref{fig:IA}. The slope of the curves at high SNR gives an indication of the achieved DoF.  
It is clear from Figure \ref{fig:IA} that  the slope of the sum-rate curve with quantized feedback matches that of perfect CSI when the number of feedback bits is scaled according to \eqref{eqa} (here we have used $B_{\rm G}=[0, \, 7, \, 13, \, 20, \, 26]$ bits and the corresponding powers $P=2^\frac{2B_{\rm G}}{N_{\rm G}}$). Conversely, when the codebook size is fixed, the performance always saturates at high SNR, with the achieved performance depending on the codebook size. Simulations were performed only up to $20 \,{\rm dB}$ SNR due to the complexity associated to the growth of the codebook size with $P$.

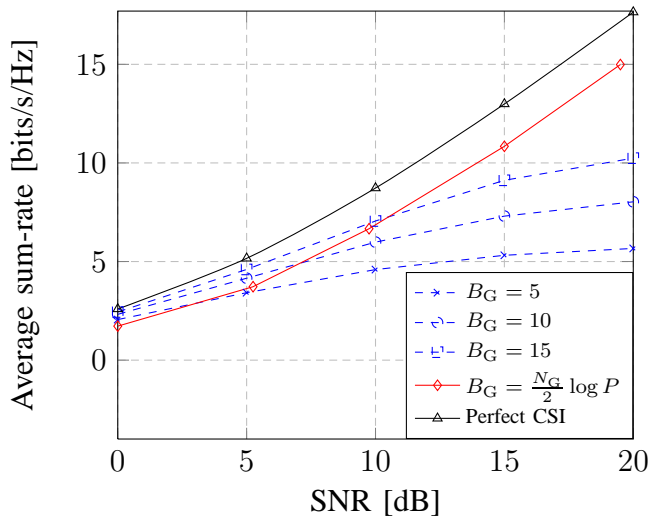
\begin{figure}
  \centering
  \begin{tikzpicture}[scale=1]
    \renewcommand{\axisdefaulttryminticks}{4}
    \pgfplotsset{every major grid/.append style={densely dashed},every mark/.append style={solid}}
    \tikzstyle{every axis y label}+=[yshift=-10pt]
    \tikzstyle{every axis x label}+=[yshift=5pt]
    \pgfplotsset{every axis legend/.append style={cells={anchor=west},fill=white, at={(1,0)}, anchor=south east,{font=\scriptsize}}}
    \begin{axis}[ 
      xmin=0,
      ymin=-4,
      xmax=20,
      ymax=17.7,
      bar width=3pt,
      grid=major,
      scaled ticks=true,
      ylabel={Average sum-rate [bits/s/Hz]},
      xlabel={{\rm SNR} [dB]}
      ]            
   
                 

     \addplot[blue,dashed,mark=x,every mark/.append style={solid}] plot coordinates{(0,2.0569)    (5,3.4207)    (10,4.5922)    (15,5.3169)    (20,5.6651)    (25,5.8046)    (30,5.8542)    (35,5.8707)    (40,5.8760)};
     
     \addplot[blue,dashed,mark=o,every mark/.append style={solid}] plot coordinates{(0,2.3220)    (5,4.1550)    (10,5.9851)    (15,7.3029)    (20,8.0284)    (25,8.3532)    (30,8.4785)    (35,8.5226)    (40,8.5373)};
    
     \addplot[blue,dashed,mark=square,every mark/.append style={solid}] plot coordinates{(0,2.4420)    (5,4.5997)    (10,7.0555)    (15,9.1196)    (20,10.2427)    (25,8.3532)    (30,8.4785)    (35,8.5226)    (40,8.5373)};
     
      \addplot[red,mark=diamond] plot coordinates{(0,1.7240)    (5.25,3.7240)    (9.75,6.6640)    (15,10.8420)    (19.5,14.9883)};
      
       \addplot[black,smooth,mark=triangle] plot coordinates{(0,2.5948)    (5,5.1614)    (10,8.7260)   (15,12.9892)  (20,17.6472)          } ;

 \legend{  {$B_{\rm G}=5$},{$B_{\rm G}=10$}, {$B_{\rm G}=15$}, {$B_{\rm G}=\frac{N_{\rm G}}{2}\log  P$},{Perfect CSI}}
    \end{axis}
  \end{tikzpicture}
  \caption{Sum-rate according to \eqref{E121} of the proposed method for different number of bits, for the $3$-user MIMO IC, $N=M=2$.}
  \label{fig:IA}
\end{figure}

\subsection{Perturbations on the Grassmann manifold}
\label{GMperturbation}
In order to validate the DoF results of Section~\ref{sec:rateanal}, an evaluation of the achieved sum-rate at high SNR is required. In order to deal with exponentially large codebooks, we propose to replace the quantization process with a perturbation which approximates the quantization error. In other words, we propose to replace ${\hat {\bf F}}_i$ by a matrix that can be computed directly by an appropriate perturbation of ${\bf F}_i$. This approach provides a good approximation of the achievable performance, while sparing the complexity associated with the codebook generation and the quantization in RVQ.\\

Let us consider a point on ${\mathcal G}_{n,p}$, represented by a $n \times p$ truncated unitary matrix $\bf F$.  Here, we assume that $n\geq2p$ (otherwise it is more efficient to consider the complementary $n-p$ dimensional subspace).
Since the columns of $\bf F$ are orthonormal, they can be completed to form an orthonormal basis of the $n$-dimensional space.
In fact, according to \cite{Barg}, any other point on ${\mathcal G}_{n,p}$ can be represented in the basis constituted by the columns of the unitary matrix ${\bf W} = \left[{\bf F} \,\, {\bf F}^{\rm c}\right]$ as
\begin{equation}\label{eq_Fbar}
{\bf \bar F}={\bf W} \begin{bmatrix} {\bf C} \\ {\bf S} \\ {\bf 0}_{n-2p} \end{bmatrix}\!\!,
\end{equation}
for some ${\bf F}^{\rm c}$ in the null space of $\bf F$ and  
\begin{equation}\label{i33}
{\bf C}=\left [ \begin{array}{ccc} \cos\theta_1 & \cdots & 0 \\ \vdots & \ddots & \vdots \\ 0 & \cdots & \cos\theta_p \end{array}\right ] \!\!, \,\,{\bf S}=\left [ \begin{array}{ccc} \sin\theta_1 & \cdots & 0 \\ \vdots & \ddots & \vdots \\ 0 & \cdots & \sin\theta_p \end{array}\right ] 
\end{equation}
where $\theta_1,\ldots,\theta_p$ are real angles. 
Clearly, for $\theta_1=\ldots=\theta_p=0$, we obtain ${\bf \bar F}=\bf F$. More generally, the squared chordal distance between the two points on ${\mathcal G}_{n,p}$ represented by $\bf F$ and ${\bf \bar F}$ is
\begin{equation}
r=d_c^2({\bf \bar F},{\bf F})=\sum_{i=1}^p \sin^2\theta_i \,.
\end{equation}

Therefore, in order to generate random perturbations of a certain chordal distance $\sqrt{r}$ from $\bf F$, we propose to generate random values for the angles $\theta_1,\ldots,\theta_p$ such that $\sum_{i=1}^p \sin^2\theta_i=r$, and to pick a random orthonormal basis ${\bf F}^{\rm c}$ of the null subspace of ${\bf F}$. The perturbed matrix is then computed using \eqref{eq_Fbar}. 

The histogram (not shown) of the squared quantization error $d_c^2({\bf \hat F},{\bf F})$ obtained from an implementation of the RVQ quantizer suggests that the Gaussian distribution is a good approximation for the probability density function of $r$. The parameters of this distribution can be obtained from \cite[Theorem~6]{rajathesis} which provides bounds on the $k$-th moment of the chordal distance $D^{(k)}={\rm E}_{{\mathcal S},{\bf F}}(d_c^k({\hat {\bf F}},{\bf F}))$.
Since those bounds are asymptotically tight when the codebook size increases, we arbitrarily choose to use the upper bound\footnote{Experiments have shown no noticeable performance difference when using the lower bound instead.} as an approximation of $D^{(k)}$, i.e. 
\begin{equation}
{\bar r} \triangleq  \frac{\Gamma ({\frac{2}{N_{\rm G}}})}{{\frac{N_{\rm G}}{2}}{(c\, 2^{B_{\rm G}})}^{\frac{2}{N_{\rm G}}}} \approx D^{(2)} 
\end{equation}
is the mean and 
\begin{equation}
\sigma^2_r \triangleq \frac{\Gamma ({\frac{4}{N_{\rm G}}})}{{\frac{N_{\rm G}}{4}}{(c\, 2^{B_{\rm G}})}^{\frac{4}{N_{\rm G}}}} - {\bar r}^2 \approx  D^{(4)}-(D^{(2)})^2 
\end{equation}
is the variance. We propose generate the values for $r$ according to $\mathcal{N}({\bar r},\sigma^2_r)$ truncated to $\mathbb{R}^+$.  This process is summarized in Algorithm~\ref{algo_perturbation}.

\begin{algorithm}[h!] 
  \caption{Generating random perturbations around ${\bf F}$} \label{algo_perturbation}
     \begin{itemize}
	     \item Draw a random realization of the squared chordal distance $r$ from $\mathcal{N}(\bar r,\sigma^2_r)$ 
	     \item If $r<0$, generate a new sample
	     \item Draw independent $s_1,\ldots , s_p$ uniformly from the interval $(0,1)$
	     \item Compute the angles $\theta_i=\sin^{-1}\left(\frac{s_i\sqrt{r}}{\sqrt{\sum_{i=1}^p s_i^2}}  \right)$
	     \item Generate a random orthonormal basis ${\bf F}^{\rm c}$ of the null space of ${\bf F}$
             \item Compute ${\bf \bar F}$ according to \eqref{eq_Fbar}.
     \end{itemize}
  \end{algorithm}

Simulations were performed in order to validate experimentally the perturbation method proposed above.
The sum-rate performance achieved by IA for the CSI obtained from the perturbation method is plotted against the performance obtained for the actual quantization scheme in Figure \ref{fig:perturb}. It is clear that the proposed perturbation method accurately approximates the Grassmannian quantization process, even for small codebooks. 

\subsection{Validation of the DoF results}
We now use the perturbation technique introduced in the previous section to analyze the CSI feedback scheme from Section~\ref{section_proposed_quantized_feedback} in the high SNR regime.
Figure \ref{fig:lb} depicts the sum rate performance using the perturbation method compared to perfect CSI and to the lower bound derived in \eqref{in28}. The slope of the sum rate at high SNR regime obtained for the quantizer with $B_{\rm G}=\frac{N_{\rm G}}{2}\log P$ bits is identical to that of perfect CSI, as is the case for the lower bound derived in \eqref{in28}. 
  \\

\begin{figure}
  \centering
  \begin{tikzpicture}[scale=1]
    \renewcommand{\axisdefaulttryminticks}{4}
    \pgfplotsset{every major grid/.append style={densely dashed},every mark/.append style={solid}}
    \tikzstyle{every axis y label}+=[yshift=-10pt]
    \tikzstyle{every axis x label}+=[yshift=5pt]
    \pgfplotsset{every axis legend/.append style={cells={anchor=west},fill=white, at={(1,0)}, anchor=south east,{font=\scriptsize}}}
    \begin{axis}[ 
      xmin=0,
      ymin=-1,
      xmax=40,
      ymax=15,
      bar width=3pt,
      grid=major,
      scaled ticks=true,
      ylabel={Sum-rate [bits/s/Hz]},
      xlabel={{\rm SNR} [dB]}
      ]            
      
                 

      
        \addplot[blue,mark=x,every mark/.append style={solid}] plot coordinates{(0,2.0834)    (5,3.4483)    (10,4.5916)    (15,5.2723)    (20,5.5859)    (25,5.7076)    (30,5.7502)    (35,5.7642)    (40,5.7687)};
     
     \addplot[blue,mark=o,every mark/.append style={solid}] plot coordinates{(0,2.3247)    (5,4.1639)    (10,5.9900)    (15,7.2750)    (20,7.9553)    (25,8.2481)    (30,8.3581)    (35,8.3962)    (40,8.4087)};
    
     \addplot[blue,mark=square,every mark/.append style={solid}] plot coordinates{(0,2.4663)    (5,4.6525)    (10,7.1437)    (15,9.2133)    (20,10.4963)    (25,11.1207)    (30,11.3755)    (35,11.4682)    (40,11.4998)};
     
      \addplot[blue,mark=diamond] plot coordinates{(0,1.7265)    (5.25,3.77)    (9.75,6.81)    (15,10.75)    (19.5,15.03)   (24.85,19.93)   (30.1,24.82)   (35.4,30.04)   (39.9,34.66) };

     \addplot[blue,dashed,mark=x,every mark/.append style={solid}] plot coordinates{(0,2.0569)    (5,3.4207)    (10,4.5922)    (15,5.3169)    (20,5.6651)    (25,5.8046)    (30,5.8542)    (35,5.8707)    (40,5.8760)};
     
     \addplot[blue,dashed,mark=o,every mark/.append style={solid}] plot coordinates{(0,2.3220)    (5,4.1550)    (10,5.9851)    (15,7.3029)    (20,8.0284)    (25,8.3532)    (30,8.4785)    (35,8.5226)    (40,8.5373)};
  
     \addplot[blue,dashed,mark=square,every mark/.append style={solid}] plot coordinates{(0,2.4387)    (5,4.6090)    (10,7.0982)    (15,9.2025)    (20,10.5508)    (25,11.2377)    (30,11.5320)    (35,11.6438)    (40,11.6828)};
     
      \addplot[blue,dashed,mark=diamond] plot coordinates{(0,1.7240)    (5.25,3.7240)    (9.75,6.6640)    (15,10.8420)    (19.5,14.9883)};

 \legend{  {$B_{\rm G}=5$},{$B_{\rm G}=10$}, {$B_{\rm G}=15$}, {$B_{\rm G}=\frac{N_{\rm G}}{2} \log P$}}
    \end{axis}
  \end{tikzpicture}
  \caption{Comparison of the perturbation scheme from Section~\ref{GMperturbation} (solid) to the real quantizer \eqref{eq_quantizer} (dashed), for the $3$-user MIMO IC, $N=M=2$.}
  \label{fig:perturb}
\end{figure}
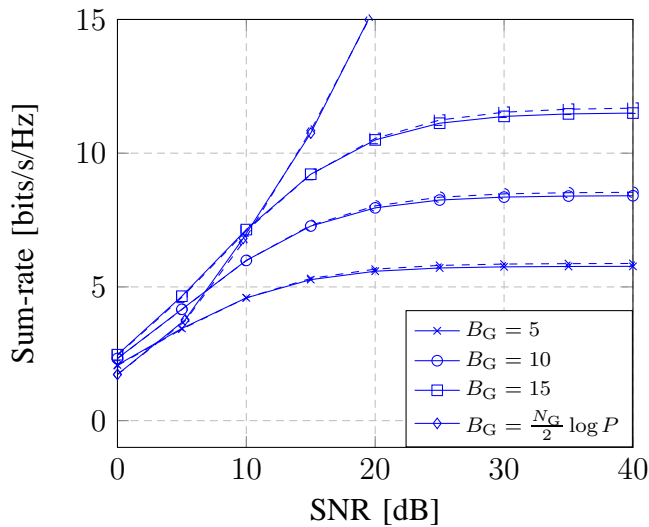
\begin{figure}
  \centering
  \begin{tikzpicture}[scale=1]
    \renewcommand{\axisdefaulttryminticks}{4}
    \pgfplotsset{every major grid/.append style={densely dashed},every mark/.append style={solid}}
    \tikzstyle{every axis y label}+=[yshift=-10pt]
    \tikzstyle{every axis x label}+=[yshift=5pt]
    \pgfplotsset{every axis legend/.append style={cells={anchor=west},fill=white, at={(0,1)}, anchor=north west,{font=\scriptsize}}}
    \begin{axis}[ 
      xmin=0,
      ymin=-5,
      xmax=40,
      ymax=40,
      bar width=3pt,
      grid=major,
      scaled ticks=true,
      ylabel={Sum-rate [bits/s/Hz]},
      xlabel={{\rm SNR} [dB]}
      ]            
   
     \addplot[blue,dashed,mark=x,every mark/.append style={solid}] plot coordinates{(0,2.4756)    (5,4.7990)    (10,7.6703)    (15,10.2614)    (20,11.8843)    (25,12.6068)    (30,12.8656)    (35,12.9498)    (40,12.9764)};
      
     \addplot[red,mark=square,every mark/.append style={solid}] plot coordinates{(0,2.5808)    (5,5.0770)    (10,8.3694)    (15,11.8417)    (20,14.7583)    (25,16.6810)    (30,17.6855)    (35,18.1190)    (40,18.2824)};

     \addplot[blue,dashed,mark=o,every mark/.append style={solid}] plot coordinates{(0,-2.3512)    (5,0.4213)    (10,3.6788)    (15,8.1389)    (20,12.4684)    (25,17.5059)    (30,22.6206)    (35,27.2511)    (40,32.4126)};
   
      \addplot[red,mark=diamond] plot coordinates{(0,1.7265)    (5.25,3.88)    (9.75,6.6530)    (15,10.75)    (19.5,14.73)   (24.85,19.79)   (30.1,24.82)   (35.4,30.19)   (39.9,34.66) };
           
       \addplot[black,smooth,mark=triangle] plot coordinates{(0,2.6228)    (5,5.2147)    (10,8.8014)   (15,13.0779)  (20,17.7419)     (25,22.5930)    (30,27.5247)    (35,32.4884)    (40,37.4642)     } ;

 \legend{  {$B_{\rm G}=25$, lower bound }, {$B_{\rm G}=25$ perturbation },{$B_{\rm G}=\frac{N_{\rm G}}{2}\log  P$, lower bound }, {$B_{\rm G}=\frac{N_{\rm G}}{2}\log  P$, perturbation},{Perfect CSI}}
    \end{axis}
  \end{tikzpicture}
  \caption{Sum rate performance using the perturbation method compared to perfect CSI and the lower bound derived in \eqref{in28}, for the $3$-user MIMO IC, $N=M=2$.}
  \label{fig:lb}
\end{figure}
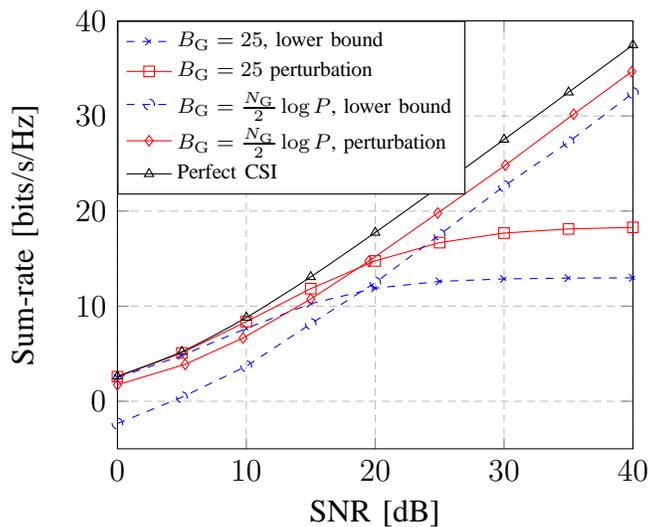

\section{Conclusion}
\label{sec:conclusion}
A new CSI feedback scheme for interference alignment on the K-user MIMO interference channel was proposed consisting in a parsimonious representation based on the Grassmann manifold. We characterized the scaling of the number of feedback bits with the SNR required in order to preserve the multiplexing gain achievable using perfect CSI. Simulations results confirm that our scheme provides a better sum rate performance compared to quantization of the normalized channel matrices for the same number of feedback bits. Furthermore, considering quantization on the Grassmann manifold, we introduced a model for the chordal distance of the quantization error which facilitates the numerical performance analysis of schemes requiring intractably large codebooks.\\

\appendices

\section{Proof of Lemma~\ref{thm_boundedleakage}}
\label{appendix_leakage_bound}

The power of the interference leakage at receiver $i$ reads
\begin{equation}\begin{split}\label{e9}
L_i&={\rm tr}\left(\frac{P}{d}\sum\limits_{j= 1, j\neq i}^K {\bf G}_i^{\rm H}{{{{\bf H}}}_{ij}}{{{{\bf V}}}_j}{{{{\bf V}}}_j^{\rm H}}{{{{\bf H}}}_{ij}^{\rm H}}{\bf G}_i\right)\\
&=\frac{P}{d}{\rm tr}\left({\bf G}_i^{\rm H}{\bf H}_i{\bf V}_{-i}{\bf V}_{-i}^{\rm H}{\bf H}_i^{\rm H}{\bf G}_i\right)\\
&= \frac{P}{d}||{\bf G}_i^{\rm H}{\bf H}_i{\bf V}_{-i} ||_{\rm F}^2.\\
\end{split}\end{equation}
Substituting ${\bf G}_i^{\rm H}=({\bf C}_i^{-1}{\bf F}_i^{\rm H}{\hat {\bf F}}_i{\tilde {\bf U}}_i)^{\rm H}$ and ${\bf H}_i={\bf C}_i^{\rm H}{\bf F}_i^{\rm H}$ gives
\begin{equation}\begin{split}\label{e99}
L_i&=\frac{P}{d}||{\tilde {\bf U}}_i^{\rm H}{\hat {\bf F}}_i^{\rm H}{\bf F}_i{\bf C}_i^{-{\rm H}}{\bf C}_i^{\rm H}{\bf F}_i^{\rm H}{\bf V}_{-i} ||_{\rm F}^2\\
&=\frac{P}{d}||{\tilde {\bf U}}_i^{\rm H}{\hat {\bf F}}_i^{\rm H}{\bf F}_i{\bf F}_i^{\rm H}{\bf V}_{-i} ||_{\rm F}^2.\\
\end{split}\end{equation}
Using the alignment equation \eqref{e8} and the fact that  ${\hat {\bf F}}_i^{\rm H}{\hat {\bf F}}_i={\bf I}_N$ yields ${\tilde {\bf U}}_i^{\rm H}{\hat {\bf F}}_i^{\rm H}{\hat {\bf F}}_i{\hat {\bf F}}_i^{\rm H}{\bf V}_{-i}=0$, therefore \eqref{e99} can be rewritten as
\begin{equation}\begin{split}\label{e299}
L_i=\frac{P}{d}||{\tilde {\bf U}}_i^{\rm H}{\hat {\bf F}}_i^{\rm H}({\bf F}_i{\bf F}_i^{\rm H}-{\hat {\bf F}}_i{\hat {\bf F}}_i^{\rm H}){\bf V}_{-i} ||_{\rm F}^2.\\
\end{split}\end{equation}  
Using the facts that $||{\bf X}||_{\rm F} \leq \sqrt{{\rm rank}({\bf X})} \, ||{\bf X}||_2$, $||{\bf X}||_2 \leq ||{\bf X}||_{\rm F}$ and $||{\bf XY}||_2\leq ||{\bf X}||_2 \, ||{\bf Y}||_2$, we have
\begin{equation}\begin{split}\label{e288}
L_i&=\frac{P}{d}||{\tilde {\bf U}}_i^{\rm H}{\hat {\bf F}}_i^{\rm H}({\bf F}_i{\bf F}_i^{\rm H}-{\hat {\bf F}}_i{\hat {\bf F}}_i^{\rm H}){\bf V}_{-i} ||_{\rm F}^2\\
&\leq P||{\tilde {\bf U}}_i^{\rm H}{\hat {\bf F}}_i^{\rm H}({\bf F}_i{\bf F}_i^{\rm H}-{\hat {\bf F}}_i{\hat {\bf F}}_i^{\rm H}){\bf V}_{-i} ||_2^2\\
&\leq P||{\tilde {\bf U}}_i^{\rm H}||_2^2 \, ||{\hat {\bf F}}_i^{\rm H}||_2^2 \, ||({\bf F}_i{\bf F}_i^{\rm H}-{\hat {\bf F}}_i{\hat {\bf F}}_i^{\rm H})||_2^2 \, ||{\bf V}_{-i} ||_2^2\\
&= P||({\bf F}_i{\bf F}_i^{\rm H}-{\hat {\bf F}}_i{\hat {\bf F}}_i^{\rm H})||_2^2\\
&\leq P||({\bf F}_i{\bf F}_i^{\rm H}-{\hat {\bf F}}_i{\hat {\bf F}}_i^{\rm H})||_{\rm F}^2\\
&= 2Pd_c^2({\hat {\bf F}}_i,{\bf F}_i).
\end{split}\end{equation}
The second equality holds because ${\tilde {\bf U}}_i^{\rm H}$, ${\hat {\bf F}}_i^{\rm H}$ and ${\bf V}_{-i} $ are truncated unitary matrices, which implies that their spectral norm is 1.
%

From \cite[Theorem 5]{rajathesis}, if a codebook is generated using the sphere-packing procedure, the maximum value of the quantization error in terms of the chordal distance can be upper bounded as
\begin{equation}\label{bound_delta_imax}
\max_{{\bf F}_i \in {\mathcal G}_{(K-1)M,N}} d_c({\hat {\bf F}}_i,{\bf F}_i) \leq \frac{2}{{(c\, 2^{B_{\rm G}})}^{\frac{1}{N_{\rm G}}}} \left(1+ o\left(2^{-\frac{B_{\rm G}}{N_{\rm G}}}\right)\right).
\end{equation}
The constant $c$ in \eqref{cequ} is obtained from \cite[Corollary 1]{dai}.
Combining \eqref{e288} and \eqref{bound_delta_imax} yields \eqref{e293}. \\

\section{Proof of Lemma \ref{inlem2}}
\label{appendix_scaling_naive}
Similar to \eqref{e9}, the power of the interference leakage at receiver $i$ can be written as
\begin{align}\label{err9}
{\bar L}_i&={\rm tr}\left(\frac{P}{d}\sum\limits_{j= 1, j\neq i}^K {\tilde {\bf U}}_i^{\rm H}{{{{\bf H}}}_{ij}}{{{{\bf {\bf V}}}}_j}{{{{\bf {\bf V}}}}_j^{\rm H}}{{{{\bf H}}}_{ij}^{\rm H}}{\tilde {\bf U}}_i\right)\\
&= \frac{P}{d}\sum\limits_{j= 1, j\neq i}^K ||{\tilde {\bf U}}_i^{\rm H}{\bf H}_{ij}{\bf V}_{j} ||_{\rm F}^2\\
&= \frac{P}{d}\sum\limits_{j= 1, j\neq i}^K ||{\tilde {\bf U}}_i^{\rm H}({\bf H}_{ij}-\alpha{\bf \hat H}_{ij}){\bf V}_{j} ||_{\rm F}^2\\
&\leq \frac{P}{d}\sum\limits_{j= 1, j\neq i}^K ||{\tilde {\bf U}}_i^{\rm H}||_{\rm F}^2 \ ||({\bf H}_{ij}-\alpha{\bf \hat H}_{ij})||_{\rm F}^2 \ ||{\bf V}_{j} ||_{\rm F}^2\\
&\leq Pd\sum\limits_{j= 1, j\neq i}^K ||{\bf H}_{ij}-\alpha{\bf \hat H}_{ij}||_{\rm F}^2\\
&= Pd\sum\limits_{j= 1, j\neq i}^K ||{\rm vec}({\bf H}_{ij})-\alpha{\rm vec}({\bf \hat H}_{ij})||_2^2\\
&\leq Pd\sum\limits_{j= 1, j\neq i}^K {||{\rm vec}({\bf H}_{ij})||_2^2} \ \left|\left|{\bf z}_{ij}-\alpha \frac{{\bf \hat z}_{ij}}{||{\rm vec}({\bf H}_{ij})||_2}\right|\right|_2^2
\end{align}
for an arbitrary scalar $\alpha$. In particular, choosing $\alpha={\bf \hat z}_{ij}^{\rm H}{\bf z}_{ij}{||{\rm vec}({\bf H}_{ij})||_2}$ yields
\begin{align}\label{err15}
{\bar L}_i &\leq  Pd\sum\limits_{j=1, j\neq i}^K{||{\rm vec}({\bf H}_{ij})||_2^2} \ ||{\bf z}_{ij}{\bf z}^{\rm H}_{ij}{\bf z}_{ij}-{\bf \hat z}_{ij}{\bf \hat z}^{\rm H}_{ij}{\bf z}_{ij}||_2^2\\
&\leq  Pd\sum\limits_{j=1, j\neq i}^K{||{\rm vec}({\bf H}_{ij})||_2^2} \ ||{\bf z}_{ij}{\bf z}^{\rm H}_{ij}-{\bf \hat z}_{ij}{\bf \hat z}^{\rm H}_{ij}||_{\rm F}^2 \ ||{\bf z}_{ij}||_2^2\\
&={2Pd}\sum\limits_{j=1, j\neq i}^K{||{\rm vec}({\bf H}_{ij})||_2^2} \ (1- |{{\bf z}_{ij}^{\rm H}{\bf \hat z}_{ij}}|^2)\\
&\leq{2PdB_{\rm max}}\sum\limits_{j=1, j\neq i}^K(1- |{{\bf z}_{ij}^{\rm H}{\bf \hat z}_{ij}}|^2)\\
&={2PdB_{\rm max}}D^2_c({\bf Z}_i,{\bf \hat Z}_i)
\end{align}
where $B_{\rm max}=\max_{j}||{\rm vec}({\bf H}_{ij})||_2^2$. \\

From \cite[theorem II.1]{rajac}, the distance on the composite Grassmann manifold can be bounded for any codebook obtained via sphere-packing as $\max_{{\bf Z}_{i} \in {\mathcal G}^{K-1}_{MN,1}} D_c({{\bf  Z}}_{i},{\bf \hat Z}_{i})\leq \frac{2}{{({\bar c}\, 2^{B_{\rm INM}})}^{\frac{1}{N_{\rm INM}}}} \triangleq {\bar \Delta}$ which results in
 \begin{equation}\label{err20}
{\bar L}_i \leq {2Pd}B_{\rm max}{\bar \Delta}^2 \leq \frac{8PdB_{\rm max}}{{({\bar c}\, 2^{B_{\rm INM}})}^{\frac{2}{N_{\rm INM}}}}
 \end{equation}
where $N_{\rm INM}$ was defined in Section~\ref{section_feedbackdimension} and ${\bar c}$ is a constant.
It is clear from \eqref{err20} that quantizing ${\bf Z}_i$ with $B_{\rm INM}=\frac{N_{\rm INM}}{2}{\rm log} P$ bits at receiver $i$ guarantees that ${\bar L}_i$ remains bounded regardless of the SNR.

\section{Proof of Lemma \ref{lemma_LB_rate}}
\label{appendix_proof_LB_rate}
Consider the following quantity
\begin{eqnarray}
R_i''-R_i'&=&  \log  \left| {{\bf G}_i^{\rm H}{\bf G}_i + \frac{P}{d}{\bf Q}_{\rm I}^i} \right|-A^i\\
&\leq& \log \left| {{\bf G}_i^{\rm H}{\bf G}_i + \frac{P}{d}{\bf Q}_{\rm I}^i} \right|, \label{rl1}
\end{eqnarray}
where $A^i= \log  \left| {{\bf G}_i^{\rm H}{\bf G}_i + \frac{P}{d}({\bf Q}_{\rm S}^i+{\bf Q}_{\rm I}^i)} \right|-\log \left| {{\bf G}_i^{\rm H}{\bf G}_i + \frac{P}{d}{\bf Q}_{\rm S}^i} \right|$ and $A^i\geq 0$ since ${\bf Q}_{\rm I}^i$ is positive semi-definite\footnote{In fact, when the number of feedback bits is scaled according to \eqref{eqa},
the bound in \eqref{rl1} gets tighter as the SNR increases. This can be seen by noticing that
$\lim_{P \rightarrow \infty}A^i=\lim_{P \rightarrow \infty} \log  \left| {{{\bf I}_{{d}}} + \frac{P}{d}{\bf Q}_{\rm I}^i}\left({{\bf G}_i^{\rm H}{\bf G}_i + \frac{P}{d}{\bf Q}_{\rm S}^i}\right)^{-1} \right|=\lim_{P \rightarrow \infty} \log  \left| {{{\bf I}_{{d}}} + {\bf Q}_{\rm I}^i} {{\bf Q}_{\rm S}^i}^{-1} \right|$, which goes to zero since ${\bf Q}_{\rm S}^i$ is full rank almost surely and when feedback scales according to \eqref{eqa} we have $||{\bf Q}_{\rm I}^i||_2 \rightarrow 0$.}. 

Since the argument of the determinant is of rank at most $d$, 
\begin{eqnarray}
R_i''-R_i'& \leq& d \log \left( \lambda_{\rm max}\left({{\bf G}_i^{\rm H}{\bf G}_i + \frac{P}{d}{\bf Q}_{\rm I}^i} \right) \right)\\
& \leq& d \log \left(\lambda_{\rm max}\left( {\bf G}_i^{\rm H}{\bf G}_i \right) + \frac{P}{d} \lambda_{\rm max}\left({{\bf Q}_{\rm I}^i} \right) \right)\\
& =& d \log \left( \left|\left|{\bf G}_i\right|\right|_2^2+ \frac{P}{d}  \left|\left|{\bf G}_i^{\rm H}{\bf H}_i{\bf V}_{-i} \right|\right|_2^2\right), 
\end{eqnarray}
where the second inequality follows by the fact that ${\bf G}_i^{\rm H}{\bf G}_i $ and ${\bf Q}_{\rm I}^i$ are Hermitian matrices. Furthermore $\left|\left|{\bf G}_i\right|\right|_2 = ||{\bf C}_i^{-1}{\bf F}_i^{\rm H}{\hat {\bf F}}_i{\tilde {\bf U}}_i ||_2 \leq ||{\bf C}_i^{-1}||_2 \ ||{\bf F}_i^{\rm H} ||_2 \ ||{\hat {\bf F}}_i ||_2 \ ||{\tilde {\bf U}}_i ||_2 = ||{\bf C}_i^{-1}||_2$.
From eqs.~\eqref{e9}--\eqref{e288} we have $||{\bf G}_i^{\rm H}{\bf H}_i{\bf V}_{-i} ||_2^2 \leq 2d_c^2({\hat {\bf F}}_i,{\bf F}_i)$. Using these bounds,
\begin{eqnarray} \label{eq_log_bounded_chordal}
R_i''-R_i'& \leq& d \log \left( ||{\bf C}_i^{-1}||_2^2+ \frac{2P}{d}  d_c^2({\hat {\bf F}}_i,{\bf F}_i)\right).
\end{eqnarray}
Combining with \eqref{bound_delta_imax} yields \eqref{in24}.\\

\section{Proof of Lemma \ref{lem_dof_no_intf}}
\label{appendix_proof_dof_no_intf}

It suffices to prove that $\lim_{P\rightarrow \infty} \frac{\log \left| {{\bf G}_i^{\rm H}{\bf G}_i + \frac{P}{d}{\bf Q}_{\rm S}^i} \right|}{\log P}=d$ for almost all channel realizations.
Note however that the proof is complicated by the fact that we need to consider quantization codebooks of increasing sizes when letting $P\rightarrow \infty$; since ${\bf Q}_{\rm S}^{i}={\tilde {\bf U}}_{i}^{\rm H}{\hat {\bf F}}_{i}^{\rm H}{\bf F}_i{\bf C}_i^{\rm -H}{\bf H}_{ii}{\bf V}_{i}{\bf V}_{i}^{\rm H}{\bf H}_{ii}^{\rm H}{\bf C}_i^{-1}{\bf F}_i^{\rm H}{\hat {\bf F}}_{i}{\tilde {\bf U}}_{i}$, where $\tilde {\bf U}_i$, ${\hat {\bf F}}_i$ and ${\bf V}_i$ all depend on the choice of the codebook, it is not clear whether ${\bf Q}_{\rm S}^{i}$ admits a limit for asymptotically large SNR\footnote{Although it is clear that the subspace spanned by $\hat {\bf F}_{i}$ admits a limit on the Grassmann manifold when $B_{\rm G}\rightarrow \infty$, the definition of ${\tilde {\bf U}}_{i}$ and ${\bf V}_{i}$ as one (possibly among several) solution of \eqref{e8} prevents the extension of the convergence result to those variables.}. Therefore, we resort to compactness arguments to show that there exists a series of codebooks of increasing size for which ${\bf Q}_{\rm S}^{i}$ admits a limit.

Let us consider an infinite sequence of SNRs ${\mathcal P}=\{{P_n}\}_{n\in \mathbb{N}}$ such that $\lim_{n\rightarrow \infty} {P_n} = \infty$, as well as an infinite sequence of quantization codebooks $\{\mathcal{S}_n \}_{n\in \mathbb{N}}$, such that $|\mathcal{S}_n |=P_n^{N((K-1)M-N)}$, following \eqref{eqa}.
For each SNR value $P_n$, we let ${\hat {\bf F}}_{i,n}=\mathrm{arg} \min_{{\bf S} \in \mathcal{S}_n} \,\,\,d_{c}({\bf S},{\bf F}_i)$
 and denote $({\bf V}_{1,n},\ldots,{\bf V}_{K,n},{\tilde{\bf U}}_{1,n},\ldots,{\tilde{\bf U}}_{K,n}) \in {\mathcal G}_{M,d}^{K}\times {\mathcal G}_{N,d}^{K}$ a set of matrices constituting an IA solution based on ${\hat {\bf F}}_{i,n}$. In other words, we solve \eqref{eq_quantizer} and \eqref{e8} for each $n$, yielding an infinite series of solutions.
Let us denote ${\bf W}_n=({\hat {\bf F}}_{1,n},\ldots, {\hat {\bf F}}_{K,n},{\bf V}_{1,n},\ldots,{\bf V}_{K,n},{\tilde{\bf U}}_{1,n},\ldots,{\tilde{\bf U}}_{K,n})$. ${\mathcal G}_{(K-1)M,N}^K\times{\mathcal G}_{M,d}^{K}\times {\mathcal G}_{N,d}^{K}$ is compact, as a Cartesian product of compact sets. Therefore, we can extract a convergent subseries\footnote{In order to obtain the same convergence properties for a point on ${\mathcal G}_{a,b}$ and for the corresponding unitary matrix representation ${\bf F}\in\mathbb{C}^{a,b}$, it is useful to make this representation unique, e.g. by requiring that the top square $b\times b$ subblock of ${\bf F}$ is equal to ${\bf I}_b$. For the sake of notational simplicity, we omit those details.} from $\{ {\bf W}_n \}_{n\in\mathbb{N}}$. We let $g(m)\in\mathbb{N}$ denote the index of the $m$-th element of the convergent subseries, where $g$ is a monotonically increasing function.
We also denote
\begin{equation}\begin{split}\label{dubyastar}
({\hat {\bf F}}_1^{\star},\ldots, {\hat {\bf F}}_K^{\star},{\bf V}_{1}^{\star},\ldots,{\bf V}_{K}^{\star},{\tilde{\bf U}}_{1}^{\star},\ldots,{\tilde{\bf U}}_{K}^{\star}) = \lim_{m \rightarrow \infty} {\bf W}_{g(m)}.
 \end{split}\end{equation}
Letting ${\bf Q}_{\rm S}^{i,n}={\tilde {\bf U}}_{i,n}^{\rm H}{\hat {\bf F}}_{i,n}^{\rm H}{\bf F}_{i}{\bf C}_i^{\rm -H}{\bf H}_{ii}{\bf V}_{i,n}{\bf V}_{i,n}^{\rm H}{\bf H}_{ii}^{\rm H}{\bf C}_i^{-1}{\bf F}_{i}^{\rm H}{\hat {\bf F}}_{i,n}{\tilde {\bf U}}_{i,n}$, we can now write the limit
$\lim_{m \rightarrow \infty} {\bf Q}_{\rm S}^{i,g(m)} = {\bf Q}_{\rm S}^{\star i}$, where
${\bf Q}_{\rm S}^{\star i}={\tilde {\bf U}}_{i}^{\star \rm H}{\hat {\bf F}}_{i}^{\star \rm H}{\bf F}_i{\bf C}_i^{\rm -H}{\bf H}_{ii}{\bf V}_{i}^{\star}{\bf V}_{i}^{\star \rm H}{\bf H}_{ii}^{\rm H}{\bf C}_i^{-1}{\bf F}_i^{\rm H}{\hat {\bf F}}_{i}^{\star}{\tilde {\bf U}}_{i}^{\star}$. Therefore we have
\begin{align}
\lim_{m \rightarrow \infty} \frac{\log \left|{\bf G}_i^{\rm H}{\bf G}_i+\frac{P_{g(m)}}{d} {\bf Q}_{\rm S}^{i,{g(m)}}\right|}{\log P_{g(m)}} &=\lim_{m \rightarrow \infty} \frac{\log |\frac{P_{g(m)}}{d} {\bf Q}_{\rm S}^{\star i}|}{\log P_{g(m)}}\\
&=\mathrm{rank}\left({\bf Q}_{\rm S}^{\star i}\right).
\end{align}
Since ${\hat {\bf F}}_{i}^\star$ and ${\bf F}_i$ span the same subspace, ${\hat {\bf F}}_{i}^{\star \rm H}{\bf F}_i$ is unitary. Therefore, considering the product of matrices in ${\bf Q}_{\rm S}^{\star i}$, we note that ${\tilde {\bf U}}_{i}^{\star \rm H}{\hat {\bf F}}_{i}^{\star \rm H}{\bf F}_i{\bf C}_i^{\rm -H}$ has full row rank $d$, ${\bf V}_{i}^{\star}$ has full column rank $d$, and both are independent of ${\bf H}_{ii}$, from which we conclude that $\mathrm{rank}\left({\bf Q}_{\rm S}^{\star i}\right)=d$ a.s., which proves the lemma.\\

\section{Proof of Theorem~\ref{thm_rvq}}

\label{appendix_proof_rvq}
Let us first recall that $R_i\geq R_i'$, which holds also in expectation:
\begin{equation} \label{expect_Ri_Rip}
{\rm E}_{{\mathcal S}}(R_i)\geq {\rm E}_{{\mathcal S}}(R_i').
\end{equation}
Furthermore, from \eqref{eq_log_bounded_chordal}, 
\begin{equation}\begin{split}\label{in25}
{\rm E}_{{\mathcal S}}(R_i')
& \geq R_i''-d \ {\rm E}_{{\mathcal S}}\left(\log \left(||{\bf C}_i^{-1}||_2^2+ \frac{2P}{d}  d_c^2({\hat {\bf F}}_i,{\bf F}_i)\right)\right) \\
& \geq R_i''-d \log \left(||{\bf C}_i^{-1}||_2^2+ \frac{2P}{d}  {\rm E}_{{\mathcal S}}\left(d_c^2({\hat {\bf F}}_i,{\bf F}_i)\right)\right)
\end{split}\end{equation}
where the second inequality follows by application of Jensen's inequality to the $\log$ function. The term ${\rm E}_{{\mathcal S}}(d_c^2({\hat {\bf F}}_i,{\bf F}_i))$ represents the expected value of the distortion while using a random codebook, and can be further bounded using \cite[Theorem~6]{rajathesis}, which can be summarized as follows:
for asymptotically large codebook size, when using a random codebook for quantizing a matrix ${\bf F}$ arbitrarily distributed over a manifold, the $k$-th moment of the chordal distance $D^{(k)}={\rm E}_{{\mathcal S},{\bf F}}(d_c^k({\hat {\bf F}},{\bf F}))$ can be bounded as
\begin{equation}\label{in27}
\frac{N_m}{{{(N_m+k)}}{(c\, 2^{B_{\rm G}})}^{\frac{k}{N_m}}} \leq D^{(k)} \leq   \frac{\Gamma ({\frac{k}{N_m}})}{{\frac{N_m}{k}}{(c\, 2^{B_{\rm G}})}^{\frac{k}{N_m}}},
\end{equation}
where the codebooks have $2^{B_{\rm G}}$ elements and $N_m$ is the real dimension of the corresponding manifold.  
Using the upper bound in \eqref{in27} for $k=2$ over the Grassmann manifold, combined with \eqref{expect_Ri_Rip} and \eqref{in25} results in \eqref{in28}.\\

\section*{Acknowledgments}
{
The authors would like to thank Omar El~Ayach from UT~Austin for his helpful comments. This work was supported by the FP7 project HIATUS (grant $\#$265578) of the European Commission and by the Austrian Science Fund (FWF) through grant NFN SISE (S106). We also acknowledge the support of the Newcom\# Network of Excellence in Wireless Communications of the EC.

\bibliographystyle{IEEEtran}
%

\balance
\bibliography{biblio-maxime,mybib2}
}

\end{document}